\providecommand{\algorithmname}{Algorithm}
\theoremstyle{remark}
\newtheorem{theorem}{Theorem}
\newtheorem{note}{Note}
\newtheorem{corollary}{Corollary}
\newtheorem{lemma}{Lemma}
\newtheorem{definition}{Definition}
\theoremstyle{remark}
\newtheorem{example}{Example}
\title{On the Broadcast Rate of Index Coding Problems with Symmetric and Consecutive Interference }
\begin{document}

\author{Mahesh~Babu~Vaddi~and~B.~Sundar~Rajan\\ 
 Department of Electrical Communication Engineering, Indian Institute of Science, Bengaluru 560012, KA, India \\ E-mail:~\{vaddi,~bsrajan\}@iisc.ac.in }
 
\maketitle
\begin{abstract}
A single unicast index coding problem (SUICP) with symmetric and consecutive interference (SCI) has $K$ messages and $K$ receivers, the $k$th receiver $R_k$ wanting the $k$th message $x_k$ and having interference $\mathcal{I}_k= \{x_{k-U-m},\dots,x_{k-m-2},x_{k-m-1}\}\cup\{x_{k+m+1}, x_{k+m+2},\dots,x_{k+m+D}\}$ and side-information $\mathcal{K}_k=(\mathcal{I}_k \cup x_k)^c$. In this paper, we derive a lowerbound on the broadcast rate of single unicast index coding problem with symmetric and consecutive interference (SUICP(SCI)). In the SUICP(SCI), if $m=0$, we refer this as single unicast index coding problem with symmetric and neighboring interference (SUICP(SNI)). 
In our previous work\cite{VaR5}, we gave the construction of near-optimal vector linear index codes for SUICP(SNI) with arbitrary $K,D,U$. In this paper, we convert the SUICP(SCI) into SUICP(SNI) and give the construction of near-optimal vector linear index codes for SUICP(SCI) with arbitrary $K,U,D$ and $m$. The constructed codes are independent of field size. The near-optimal vector linear index codes of SUICP(SNI) is a special case of near-optimal vector linear index codes constructed in this paper for SUICP(SCI) with $m=0$. In our previous work\cite{VaR6}, we derived an upperbound on broadcast rate of SUICP(SNI). In this paper, we give an upperbound on the broadcast rate of SUICP(SCI) by using our earlier result on the upperbound on the broadcast rate of SUICP(SNI). We derive the capacity of SUICP(SCI) for some special cases.
\end{abstract}
\section{Introduction and Background}
\label{sec1}
\IEEEPARstart {A}{n} index coding problem (ICP), comprises a transmitter that has a set of $K$ messages, $X=\{ x_0,x_1,\ldots,x_{K-1}\}$, and a set of $M$ receivers, $R=\{ R_0,R_1,\ldots,R_{M-1}\}$. Each receiver, $R_k=(\mathcal{K}_k,\mathcal{W}_k)$, knows a subset of messages, $\mathcal{K}_k \subset X$, called its \textit{side-information}, and wants to know another subset of messages, $\mathcal{W}_k \subseteq \mathcal{K}_k^\mathsf{c}$, called its \textit{Want-set}. The transmitter can take cognizance of the side-information of the receivers and broadcast coded messages, called the index code, over a noiseless channel. Birk and Kol introduced the problem of index coding with side-information in \cite{ISCO}. An index coding problem is single unicast if the demand-sets of the receivers are disjoint and the cardinality of demand-set of every receiver is one \cite{OnH}. The single unicast index coding problems were studied in \cite{YBJK}.



A solution to the ICP may be linear or nonlinear. A solution of the ICP must specify a finite alphabet $\mathcal{A}_P$, and an encoding scheme $\varepsilon:\mathcal{A}^t \rightarrow \mathcal{A}_P$ such that every receiver is able to decode the wanted message from $\varepsilon(x_0,x_1,\ldots,x_{K-1})$ and the known information. The minimum encoding length $l=\lceil log_{2}|\mathcal{A}_P|\rceil$ for messages that are $t$ bit long ($\vert\mathcal{A}\vert=2^t$) is denoted by $\beta_{t}$. The broadcast rate of the ICP is defined \cite{ICVLP} as,
$\beta \triangleq   \inf_{t} \frac{\beta_{t}}{t}.$
For an ICP, $\beta$ is the minimum number of index code symbols required to transmit to satisfy the demands of all the receivers. The capacity $C$ of an ICP is the reciprocal of the broadcast rate $\beta$. 



A single unicast index coding problem with symmetric and consecutive interference (SUICP(SCI)) with equal number of $K$ messages and receivers, is one with each receiver having a total of $U+D<K-2m$ interference, corresponding to the $D~(U \leq D)$ messages after and $U$ messages before its desired message. In this setting, the $k$th receiver $R_{k}$ demands the message $x_{k}$ having the interference
\begin{align}
\label{int1}
\nonumber
{I}_k= &\{x_{k-U-m},\dots,x_{k-m-2},x_{k-m-1}\} \cup  \\& 
~~~~~\{x_{k+m+1}, x_{k+m+2},\dots,x_{k+m+D}\}, 
\end{align}
\noindent
the side-information being 
\begin{align}
\label{sideinfo1}
\nonumber
&\mathcal{K}_k=(\mathcal{I}_k \cup x_k)^c =\{x_{k+m+D+1},x_{k+m+D+2}\ldots x_{k-m-U-1}\} \\& \cup \{x_{k-m},x_{k-m+1}\ldots x_{k-1}\} \cup \{x_{k+1},x_{k+2}\ldots x_{k+m}\}.
\end{align}

The SUICP(SCI) is pictorially represented in Fig. \ref{fig11}. All the subscripts in this paper are to be considered $~\text{\textit{modulo}}~ K$.
\begin{figure*}[ht]
\centering
\includegraphics[scale=0.55]{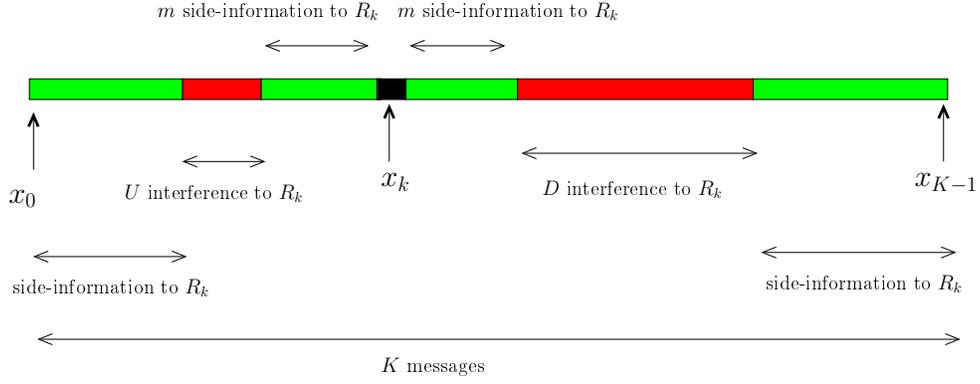}\\
\caption{SUICP(SCI).}
\label{fig11}
~ \\
\hrule
\end{figure*}

A single unicast index coding problem with symmetric neighboring interference (SUICP(SNI)) with equal number of $K$ messages and receivers, the $k$th receiver $R_k$ demands the message $x_k$ having the interference
\begin{equation}
\label{int2}
{I}_k= \{x_{k-U},\dots,x_{k-2},x_{k-1}\}\cup\{x_{k+1}, x_{k+2},\dots,x_{k+D}\}, 
\end{equation}
\noindent
the side-information being 
\begin{align}
\label{sideinfo2}
\mathcal{K}_k=(\mathcal{I}_k \cup x_k)^c=\{x_{k+D+1},x_{k+D+2},\ldots,x_{k-U-1}\}.
\end{align}

Note that the SUICP(SNI) is a special case of SUICP(SCI) with $m=0$.


The existing capacity results for SUICP(SNI) were summarized in \cite{VaR5}. There exists no capacity and optimal index coding results for SUICP(SCI) with $m \neq 0$. The symmetric index coding problems are motivated by topological interference management problems in wireless communication networks \cite{TIM}.
\subsection{AIR matrices}
In \cite{VaR2}, we constructed binary matrices of size $m \times n (m\geq n)$ such that any $n$ adjacent rows of the matrix are linearly independent over every field. We refer these matrices as adjacent row independent (AIR) matrices. The matrix obtained by Algorithm \ref{algo2} is called the $(m,n)$ AIR matrix and it is denoted by $\mathbf{L}_{m\times n}.$ The general form of the $(m,n)$ AIR matrix is shown in Fig. \ref{fig1}. 
\begin{figure*}[ht]
\centering
\includegraphics[scale=0.5]{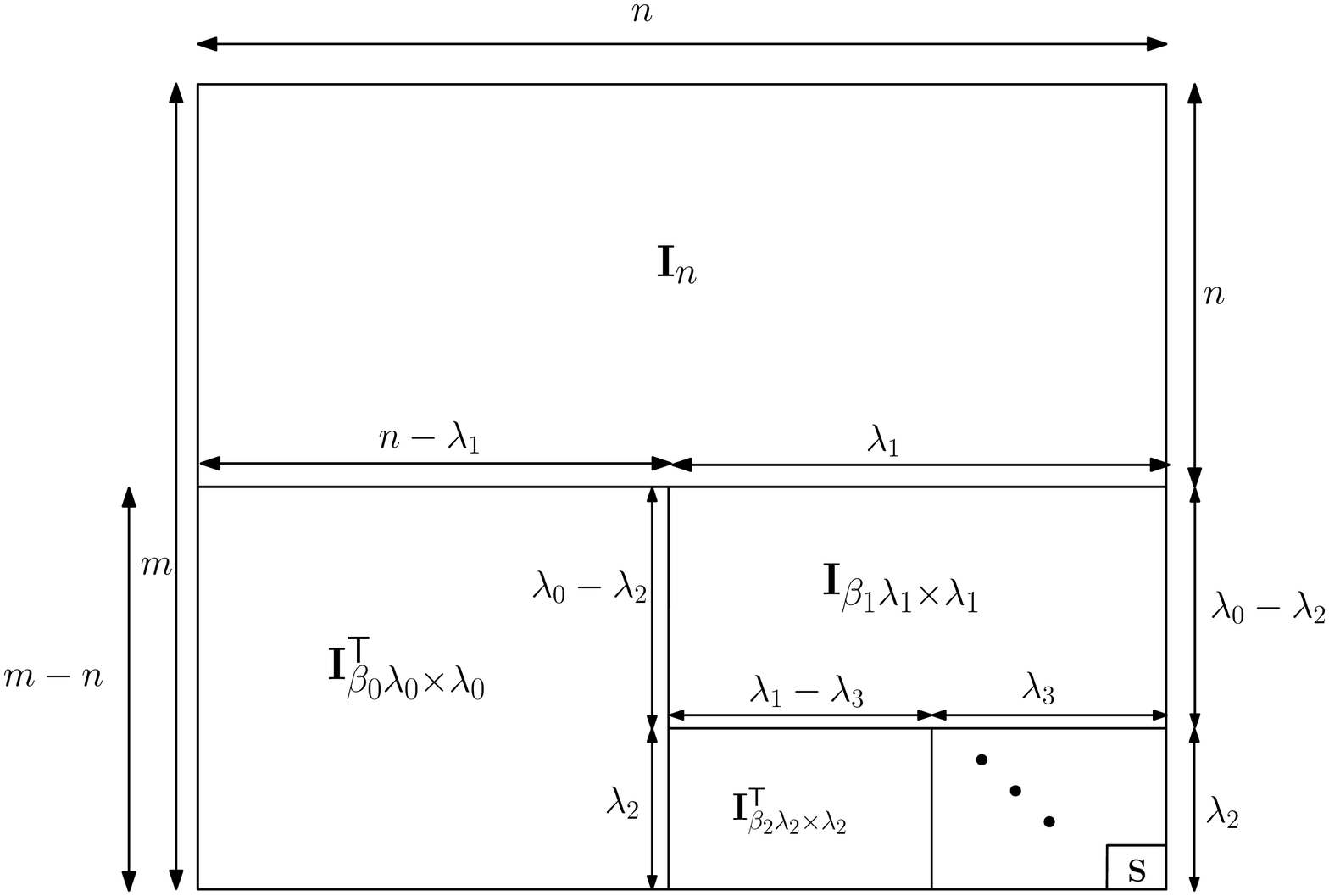}\\
~ $\mathbf{S}=\mathbf{I}_{\lambda_{l} \times \beta_l \lambda_{l}}$ if $l$ is even and ~$\mathbf{S}=\mathbf{I}_{\beta_l\lambda_{l} \times \lambda_{l}}$ otherwise.
\caption{AIR matrix of size $m \times n$.}
\label{fig1}
~ \\
\hrule
\end{figure*}
The description of the submatrices are as follows: Let $c$ and $d$ be two positive integers and $d$ divides $c$. The following matrix  denoted by $\mathbf{I}_{c \times d}$ is a rectangular  matrix.
\begin{align}
\label{rcmatrix}
\mathbf{I}_{c \times d}=\left.\left[\begin{array}{*{20}c}
   \mathbf{I}_d  \\
   \mathbf{I}_d  \\
   \vdots  \\
   \mathbf{I}_d 
   \end{array}\right]\right\rbrace \frac{c}{d}~\text{number~of}~ \mathbf{I}_d~\text{matrices}
\end{align}
and $\mathbf{I}_{d \times c}$ is the transpose of $\mathbf{I}_{c \times d}.$

Towards explaining the other quantities in the AIR matrix shown in Fig. \ref{fig1}, for a given $K,D$  and $U,$ let  $\lambda_{-1}=n,\lambda_0=m-n$ and\begin{align}
\nonumber
n&=\beta_0 \lambda_0+\lambda_1, \nonumber \\
\lambda_0&=\beta_1\lambda_1+\lambda_2, \nonumber \\
\lambda_1&=\beta_2\lambda_2+\lambda_3, \nonumber \\
\lambda_2&=\beta_3\lambda_3+\lambda_4, \nonumber \\
&~~~~~~\vdots \nonumber \\
\lambda_i&=\beta_{i+1}\lambda_{i+1}+\lambda_{i+2}, \nonumber \\ 
&~~~~~~\vdots \nonumber \\ 
\lambda_{l-1}&=\beta_l\lambda_l.
\label{chain}
\end{align}
where $\lambda_{l+1}=0$ for some integer $l,$ $\lambda_i,\beta_i$ are positive integers and $\lambda_i < \lambda_{i-1}$ for $i=1,2,\ldots,l$.

		\begin{algorithm}
		\caption{Algorithm to construct the AIR matrix $\mathbf{L}$ of size $m \times n$}
			\begin{algorithmic}[2]
				 \item Let $\mathbf{L}=m \times n$ blank unfilled matrix.
				\item [Step 1]~~~
				\begin{itemize}
				\item[\footnotesize{1.1:}] Let $m=qn+r$ for $r < n$.
				\item[\footnotesize{1.2:}] Use $\mathbf{I}_{qn \times n}$ to fill the first $qn$ rows of the unfilled part of $\mathbf{L}$.
				\item[\footnotesize{1.3:}] If $r=0$,  Go to Step 3.
				\end{itemize}

				\item [Step 2]~~~
				\begin{itemize}
				\item[\footnotesize{2.1:}] Let $n=q^{\prime}r+r^{\prime}$ for $r^{\prime} < r$.
				\item[\footnotesize{2.2:}] Use $\mathbf{I}_{q^{\prime}r \times r}^{\mathsf{T}}$ to fill the first $q^{\prime}r$ columns of the unfilled part of $\mathbf{L}$.
			    \item[\footnotesize{2.3:}] If $r^{\prime}=0$,  go to Step 3.	
				\item[\footnotesize{2.4:}] $m\leftarrow r$ and $n\leftarrow r^{\prime}$.
				\item[\footnotesize{2.5:}] Go to Step 1.
				\end{itemize}
				\item [Step 3] Exit.
		
			\end{algorithmic}
			\label{algo2}
		\end{algorithm}



\subsection{Contributions}
The contributions of this paper are summarized below:
\begin{itemize}
\item We derive a lowerbound on the broadcast rate of SUICP(SCI) to be $1+\frac{D+m}{m+1}$.
\item For the SUICP(SCI) with arbitrary $K,D,U$ and $m$, we define a set $\mathcal{\mathbf{S}}_{K,D,U,m}$ consisting of pairs of integers $(a,b)$ and prove that the rate $1+\frac{D+m+\frac{a}{b}}{m+1}$ for every $(a,b) \in \mathcal{\mathbf{S}}_{K,D,U,m}$ is achievable by $b(m+1)$-dimensional vector linear index codes and an appropriate sized AIR matrix. The constructed codes are independent of field size.    
\item We show that the constructed vector linear index codes are atmost $\frac{K~\text{mod}~(D+2m+1)}{(m+1)\left \lfloor \frac{K}{D+2m+1} \right \rfloor}$ away from the lowerbound on the broadcast rate of SUICP(SCI).
\item By using our results in \cite{VaR6}, we give an upperbound on the broadcast rate of SUICP(SCI). 
\item We obtain the capacity of SUICP(SCI) with arbitrary $K,D$ and $U$ with $m$ satisfying the relation $\text{gcd}(K,D+2m+1) \geq U+2m+1$. 
\end{itemize}

For a subset $I=\{i_1,i_2,\ldots,i_l\} \subseteq \{1,2,\ldots,K\}$, let $x_I=\{x_{i_1},x_{i_2},\ldots,x_{i_l}\}$. Let $L_k$ be the $k$th row of an AIR matrix for $k \in [0:K-1]$ and let $L_I=\{L_{i_1},L_{i_2},\ldots,L_{i_l}\}$. For vector subspaces $\mathbf{S}_{i_1},\mathbf{S}_{i_2},\ldots,\mathbf{S}_{i_l}$, let $\mathbf{S}_I=\{\mathbf{S}_{i_1},\mathbf{S}_{i_2},\ldots,\mathbf{S}_{i_l}\}$ and $<\mathbf{S}_I>=<\mathbf{S}_{i_1},\mathbf{S}_{i_2},\ldots,\mathbf{S}_{i_l}>$ denote the vector subspace spanned by all the vectors present in $\mathbf{S}_{i_1},\mathbf{S}_{i_2},\ldots,\mathbf{S}_{i_l}$. 

The remaining part of this paper is organized as follows. In Section \ref{sec2}, we derive a lowerbound on the broadcast rate of SUICP(SCI). In Section \ref{sec3}, for SUICP(SCI), we define a set $\mathbf{S}_{K,D,U,m}$ of $2$-tuples such that for every $(a,b) \in \mathcal{\mathbf{S}}_{K,D,U,m}$, the rate $1+\frac{D+m+\frac{a}{b}}{m+1}$ is achievable by using AIR matrices with vector linear index codes over every field. We also derive an upperbound on the broadcast rate of SUICP(SCI). In Section \ref{sec5}, we derive the capacity of SUICP(SCI) for some special cases. We conclude the paper in Section \ref{sec6}.
 
\section{LowerBound on the broadcast rate of SUICP(SCI)}
\label{sec2}
In this section, we derive a lowerbound on the broadcast rate of SUICP(SCI). In an $t$-dimensional vector linear index code, $x_k \in \mathbb{F}^t_q$ for $k \in [0:K-1]$. A $t$-dimensional vector linear index code of length $N$ is represented by an encoding matrix $\mathbf{L}$ $(\in \mathbb{F}^{Kt\times N}_q)$, where the $j$th column contains the coefficients used for mixing the $t$-dimensional messages $x_0,x_1,\ldots,x_{K-1}$ to get the $j$th index code symbol. Let $L_0,L_1,\ldots, L_{Kt-1}$ be the $Kt$ rows of the encoding matrix $\mathbf{L}$. Let $\mathbf{\tilde{S}}_k$ be the $t \times N$ matrix 
$$\mathbf{\tilde{S}}_{k}=\left.\left[\begin{array}{*{20}c}
   L_{kt}  \\
   L_{kt+1}  \\
   \vdots  \\
  L_{kt+t-1}   
   \end{array}\right]\right.$$
for $k \in [0:K-1]$. A codeword of the index code is 
\begin{align*}
[c_0~c_1~\ldots~c_{N-1}]=\mathbf{xL}=\sum_{i=0}^{K-1}x_i\mathbf{\tilde{S}}_k,
\end{align*}
where $\mathbf{x}=[x_{0,1}~x_{0,2}~\ldots~x_{0,t}~x_{1,1}~x_{1,2}~\ldots~x_{1,t}~\ldots~x_{K-1,t}]$. The $k$th matrix $\mathbf{\tilde{S}}_k$ $(\in \mathbb{F}^{t \times N}_q)$ contains the coefficients used for mixing the $t$-dimensional message $x_k$ in the $N$ index code symbols. That is, $<L_{kt}, L_{kt+1},\ldots, L_{kt+t-1}>$ is the subspace assigned to the message $x_k$ in $\mathbb{F}_q^N$ for $k \in [0:K-1]$.


Maleki,  Cadambe and Jafar \cite{MCJ} found the capacity of single unicast index coding problems with symmetric and neighboring side-information (SUICP(SNC)) with $K$ messages and $K$ receivers, each receiver has a total of $\tilde{U}+\tilde{D}<K$ side-information, corresponding to the $\tilde{U}$ messages before and $\tilde{D}$ messages after its desired message. In this setting, the $k$th receiver $R_{k}$ demands the message $x_{k}$ having the side-information
\begin{equation}
\label{sncantidote}
{\cal K}_k= \{x_{k-\tilde{U}},\dots,x_{k-2},x_{k-1}\}\cup\{x_{k+1}, x_{k+2},\dots,x_{k+\tilde{D}}\}.
\end{equation}

The capacity of this ICP is:
\begin{equation}
\label{snccapacity}
C=\left\{
                \begin{array}{ll}
                  {1 ~~~~~~~~~~~~ \mbox{if} ~~ \tilde{U}+\tilde{D}=K-1}\\
                  {\frac{\tilde{U}+1}{K-\tilde{D}+\tilde{U}}} ~~~ \mbox{if} ~~\tilde{U}+\tilde{D}\leq K-2. 
                  \end{array}
              \right.
\end{equation}
where $\tilde{U},\tilde{D} \in$ $\mathbb{Z},$ $0 \leq \tilde{U} \leq \tilde{D}$.

Maleki,  Cadambe and Jafar \cite{MCJ} prove the capacity given in \eqref{snccapacity} by using dimension counting outerbound. 


Suppose $\mathcal{V}_k$ denote the subspace assigned to each message $x_k$ for $k \in [0:K-1]$. Define $\alpha_j$ as follows.
\begin{align*} 
\alpha_j=\sum_{i=0}^{K-1} \text{dim}<\mathcal{V}_k,\mathcal{V}_{k+1},\ldots,\mathcal{V}_{k+j-1}>.
\end{align*}


In SUICP(SNC), every $j~(j\leq \tilde{U}+1)$ consecutive messages satisfy the property that for every receiver $R_k$ whose wanted message is in these $j$ consecutive messages, every other message in the $j$ messages is in the side-information to $R_k$. Maleki \textit{et.al} proved the following Lemma (Lemma 4 in \cite{MCJ}) to bound $\alpha_j$ in SUICP(SNC).
\begin{lemma}
\label{bound3}
For $j=2,3,\ldots,\tilde{U}+1$,
\begin{align*}
\alpha_j \geq \frac{\tilde{U}+j}{\tilde{U}+1}\alpha_1.
\end{align*}
\end{lemma}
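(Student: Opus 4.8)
I would argue entirely in the subspace picture set up above. A valid $t$-dimensional vector linear index code assigns to $x_k$ the subspace $\mathcal{V}_k=\langle L_{kt},\ldots,L_{kt+t-1}\rangle$, and the only feature of validity I need is the decoding (interference-alignment) constraint: for every receiver $R_k$, recovering $x_k$ from $\sum_i x_i\tilde{\mathbf{S}}_i$ and the side information forces
\begin{align*}
\mathcal{V}_k\cap\big\langle\mathcal{V}_i:i\in\mathcal{I}_k\big\rangle=\{0\},\qquad \mathcal{I}_k=\{k+\tilde{D}+1,\,k+\tilde{D}+2,\,\ldots,\,k-\tilde{U}-1\},
\end{align*}
the cyclic interference interval, which is non-empty precisely when $\tilde{U}+\tilde{D}\le K-2$. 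Besides this I would only use the cyclic symmetry of the quantities $\alpha_j=\sum_{k}\dim\langle\mathcal{V}_k,\ldots,\mathcal{V}_{k+j-1}\rangle$ (re-anchoring the windows does not change the sum) and the \emph{window property} noted just before the Lemma: for $j\le\tilde{U}+1$ none of the receivers inside a block of $j$ consecutive messages is constrained by any other message of that block, which already signals that the bound must be squeezed out of the global cyclic structure rather than locally.

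\textbf{Reduction to a single overlap inequality.} First I would induct on $j$, the case $j=1$ being the trivial identity $\alpha_1=\alpha_1$. For the step, applying $\dim(A+B)=\dim A+\dim B-\dim(A\cap B)$ with $A=\mathcal{V}_k$, $B=\langle\mathcal{V}_{k+1},\ldots,\mathcal{V}_{k+j-1}\rangle$ and summing over $k$ gives
\begin{align*}
\alpha_j\;=\;\alpha_1+\alpha_{j-1}-\sum_{k=0}^{K-1}\dim\big(\mathcal{V}_k\cap\langle\mathcal{V}_{k+1},\ldots,\mathcal{V}_{k+j-1}\rangle\big).
\end{align*}
Substituting the inductive hypothesis $\alpha_{j-1}\ge\frac{\tilde{U}+j-1}{\tilde{U}+1}\alpha_1$, the desired $\alpha_j\ge\frac{\tilde{U}+j}{\tilde{U}+1}\alpha_1$ reduces to $\sum_k\dim(\mathcal{V}_k\cap\langle\mathcal{V}_{k+1},\ldots,\mathcal{V}_{k+j-1}\rangle)\le\frac{\tilde{U}}{\tilde{U}+1}\alpha_1$; and since the overlap term only grows with the window length $j-1\le\tilde{U}$, it suffices to prove the single \emph{overlap inequality}
\begin{align*}
(\star)\qquad\sum_{k=0}^{K-1}\dim\big(\mathcal{V}_k\cap\langle\mathcal{V}_{k+1},\ldots,\mathcal{V}_{k+\tilde{U}}\rangle\big)\;\le\;\frac{\tilde{U}}{\tilde{U}+1}\,\alpha_1,
\end{align*}
equivalently $\sum_k\nu_k\ge\frac{1}{\tilde{U}+1}\alpha_1$ with $\nu_k=\dim\langle\mathcal{V}_k,\ldots,\mathcal{V}_{k+\tilde{U}}\rangle-\dim\langle\mathcal{V}_{k+1},\ldots,\mathcal{V}_{k+\tilde{U}}\rangle$, the fresh dimension contributed by $\mathcal{V}_k$ on top of its $\tilde{U}$ forward neighbours.

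\textbf{Proving $(\star)$.} This is where the interference constraints do all the work. For the warm-up case $\tilde{U}=1$ (neighbouring side information, $m=0$) the argument is clean: the interference interval of $R_{k+1}$ has $k-1$ as its endpoint, so $\mathcal{V}_{k-1}\cap\mathcal{V}_{k+1}=\{0\}$, hence $\mathcal{V}_{k-1}\cap\mathcal{V}_k$ and $\mathcal{V}_k\cap\mathcal{V}_{k+1}$ are independent inside $\mathcal{V}_k$, and summing $\dim(\mathcal{V}_{k-1}\cap\mathcal{V}_k)+\dim(\mathcal{V}_k\cap\mathcal{V}_{k+1})\le\dim\mathcal{V}_k$ over $k$ already gives $(\star)$ with constant $\tfrac12=\tfrac{\tilde{U}}{\tilde{U}+1}$. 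For general $\tilde{U}$ the plan is a cyclic counting argument: partition $\{0,\ldots,K-1\}$ into the $\tilde{U}+1$ residue classes modulo $\tilde{U}+1$, build the total span $\langle\mathcal{V}_0,\ldots,\mathcal{V}_{K-1}\rangle$ by inserting the subspaces in each of the $\tilde{U}+1$ compatible orders, and use the interference constraints of the receivers straddling the $(\tilde{U}+1)$-blocks to show that each ordering forces, averaged over the $K$ insertions, at least a $\tfrac{1}{\tilde{U}+1}$ fraction of fresh dimension; averaging the $\tilde{U}+1$ orderings then yields $(\star)$.

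\textbf{The hard part.} The main obstacle will be exactly $(\star)$ for $\tilde{U}\ge2$: no single receiver has an entire $(\tilde{U}+1)$-block of messages inside its interference interval in general, so there is no per-window independence statement to fall back on, and the $K$ interference constraints must be combined globally so that precisely the constant $\tfrac{\tilde{U}}{\tilde{U}+1}$ (informally, \emph{one broadcast symbol per $\tilde{U}+1$ messages}) is produced. Rigorously this amounts to exhibiting the dual certificate of the polymatroid linear program that minimises $\alpha_j$ subject to the interference equalities $\dim\langle\mathcal{V}_i:i\in\mathcal{I}_k\cup\{k\}\rangle=\dim\langle\mathcal{V}_i:i\in\mathcal{I}_k\rangle+\dim\mathcal{V}_k$ and fixed $\alpha_1$, whose optimum is $\frac{\tilde{U}+j}{\tilde{U}+1}\alpha_1$ — the dimension-counting step carried out in \cite{MCJ}.
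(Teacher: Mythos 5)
Your reduction is sound as far as it goes: the identity $\alpha_j=\alpha_1+\alpha_{j-1}-\sum_k\dim\big(\mathcal{V}_k\cap\langle\mathcal{V}_{k+1},\ldots,\mathcal{V}_{k+j-1}\rangle\big)$ plus monotonicity of the overlap term validly reduces the lemma to your inequality $(\star)$. But $(\star)$ is not a smaller statement: since $\sum_k\dim\big(\mathcal{V}_k\cap\langle\mathcal{V}_{k+1},\ldots,\mathcal{V}_{k+\tilde{U}}\rangle\big)=\alpha_1+\alpha_{\tilde{U}}-\alpha_{\tilde{U}+1}$, your $(\star)$ is exactly the increment bound $\alpha_{\tilde{U}+1}-\alpha_{\tilde{U}}\geq\frac{1}{\tilde{U}+1}\alpha_1$, i.e.\ the hardest instance ($j=\tilde{U}+1$) of the very recursion the lemma encodes. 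You prove it only for $\tilde{U}=1$; for $\tilde{U}\geq 2$ you give a plan (residue classes modulo $\tilde{U}+1$, insertion orders, averaging) and then concede in ``the hard part'' that the global combination of the $K$ decoding constraints is not carried out, deferring to the dimension counting of \cite{MCJ}. That concession is the gap: the entire content of the lemma lives in $(\star)$, so the proposal as written does not prove the statement. Note also that for $j\leq\tilde{U}+1$ the window $\{k+1,\ldots,k+\tilde{U}\}$ lies inside $R_k$'s side-information, so no constraint of receiver $R_k$ itself touches the overlap $\mathcal{V}_k\cap\langle\mathcal{V}_{k+1},\ldots,\mathcal{V}_{k+\tilde{U}}\rangle$; the bound must be extracted from receivers outside the window, and your $\tilde{U}=1$ trick (one receiver at distance $\tilde{U}+1$ killing the single overlap) has no one-shot analogue when the window span is a genuine join of $\tilde{U}\geq 2$ subspaces.

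What is missing is precisely the argument the paper runs for the analogous Lemma~\ref{cor2} (this statement with $m$ in place of $\tilde{U}$, and the same as in \cite{MCJ}): for each $j$, first the exact identity $\sum_i\dim\langle\mathcal{V}_i,\ldots,\mathcal{V}_{i+j-2},\mathcal{V}_{i+j-2+\tilde{U}+1}\rangle=\alpha_{j-1}+\alpha_1$, valid because the receiver wanting $x_{i+j-2+\tilde{U}+1}$ has $x_i,\ldots,x_{i+j-2}$ entirely in its interference (they lie beyond its $\tilde{U}$ backward side-information), so its subspace meets their span trivially; second, a telescoping chain of $\tilde{U}$ submodularity steps that slides the outlying subspace into the window and upper-bounds the same sum by $(\tilde{U}+1)\alpha_j-\tilde{U}\alpha_{j-1}$. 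Combining yields $(\tilde{U}+1)\alpha_j\geq(\tilde{U}+1)\alpha_{j-1}+\alpha_1$ for all $2\leq j\leq\tilde{U}+1$, which proves the lemma by iteration and, at $j=\tilde{U}+1$, is exactly your $(\star)$. If you want to keep your induction-plus-overlap framing, you still must supply this identity-plus-submodular-exchange chain (or make your averaging-over-orderings sketch rigorous, which amounts to the same counting); without it the proof is incomplete.
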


In SUICP(SCI), every $j~(j \leq m+1)$ consecutive messages satisfy the property that for every receiver $R_k$ whose wanted message is in these consecutive $j$ messages, every other message in these $j$ consecutive messages is in the side-information to $R_k$. Hence, for SUICP(SCI), we  prove the following lemma which is analogous to Lemma \ref{bound3} for SUICP(SNC).
\begin{lemma}
\label{cor2}
For $j=2,3,\ldots,m+1$,  
\begin{align*}
\alpha_j \geq \frac{m+j}{m+1}\alpha_1.
\end{align*}
\end{lemma}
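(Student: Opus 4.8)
The plan is to mimic the dimension-counting argument of Maleki--Cadambe--Jafar used for Lemma \ref{bound3}, exploiting the structural fact that $j$ consecutive messages with $j \le m+1$ behave exactly like $j$ consecutive messages with $j \le \tilde U+1$ in the SUICP(SNC) setting: each receiver whose wanted message lies among these $j$ consecutive indices has all other $j-1$ messages in its side-information. Concretely, fix $j \in \{2,\dots,m+1\}$ and consider the $K$ overlapping windows $W_k = \{x_k,x_{k+1},\dots,x_{k+j-1}\}$, $k \in [0:K-1]$. For a fixed window $W_k$, I would bound $\dim\langle \mathcal V_k,\dots,\mathcal V_{k+j-1}\rangle$ from below using the decodability constraints of the $j$ receivers $R_k,\dots,R_{k+j-1}$: receiver $R_{k+i}$ wants $x_{k+i}$ and knows $x_{k},\dots,x_{k+i-1},x_{k+i+1},\dots,x_{k+j-1}$ as side-information (since $j-1 \le m$, all these are indeed in $\mathcal K_{k+i}$), so the index-code symbols restricted appropriately must let $R_{k+i}$ recover $\mathcal V_{k+i}$ from them together with the other $\mathcal V$'s in the window.

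The key step is the local inequality
\[
\dim\langle \mathcal V_k,\dots,\mathcal V_{k+j-1}\rangle \;\ge\; \frac{\text{(something)}}{m+1}\sum_{i=0}^{j-1}\dim\mathcal V_{k+i},
\]
or more precisely a per-window bound that, when summed over all $k$ and the telescoping/averaging is carried out, yields $\alpha_j \ge \frac{m+j}{m+1}\alpha_1$. Following the MCJ template, the right intermediate object is: for each window, the $j$ subspaces $\mathcal V_{k},\dots,\mathcal V_{k+j-1}$ together with the decoding constraints force $\dim\langle \mathcal V_k,\dots,\mathcal V_{k+j-1}\rangle \ge \dim\mathcal V_{k+i} + (\text{joint dimension of the rest})$ in a way that, combined across the cyclic family of windows and using $\alpha_1 = \sum_i \dim\mathcal V_i$ (all equal to $\dim\mathcal V$ by symmetry, or handled via averaging if not assumed symmetric), produces the stated ratio $\frac{m+j}{m+1}$. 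I would set up the submodularity/rank inequalities exactly as in Lemma 4 of \cite{MCJ}, with $\tilde U$ replaced by $m$, since the combinatorial hypothesis ``every other message in the $j$-window is side-information for the receiver wanting a message in the window'' holds verbatim for $j \le m+1$ in SUICP(SCI).

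The main obstacle is verifying that the SUICP(SCI) interference pattern genuinely reduces to the SUICP(SNC) pattern on windows of length $\le m+1$ with parameter $m$ playing the role of $\tilde U$ — i.e., checking that the ``gap'' of $m$ known messages on each side of the wanted message (the sets $\{x_{k-m},\dots,x_{k-1}\}$ and $\{x_{k+1},\dots,x_{k+m}\}$ in \eqref{sideinfo1}) supplies precisely the side-information MCJ's argument needs, and that the wrap-around (all indices mod $K$) together with the condition $U+D < K-2m$ prevents any window from accidentally containing an interfering message. Once that structural correspondence is pinned down, the dimension-counting inequality is a direct transcription of the proof of Lemma \ref{bound3}, so I would state the correspondence carefully and then invoke the MCJ counting argument with the substitution $\tilde U \mapsto m$.
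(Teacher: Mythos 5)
Your proposal takes essentially the same route as the paper: the paper's proof is precisely the MCJ Lemma~4 counting argument transcribed with $m$ in the role of $\tilde U$ --- it evaluates $\sum_{i}\dim\langle\mathcal{V}_i,\ldots,\mathcal{V}_{i+j-2},\mathcal{V}_{i+j-2+m+1}\rangle$ as $\alpha_{j-1}+\alpha_1$ using the decoding constraint at receiver $R_{i+j-2+m+1}$, upper-bounds the same sum by $(m+1)\alpha_j-m\alpha_{j-1}$ via iterated submodularity, and then unrolls the recursion $(m+1)\alpha_j\ge(m+1)\alpha_{j-1}+\alpha_1$ down to $\alpha_1$. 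The one refinement to your framing is that the fact doing the work in the additivity step is not the within-window mutual side-information but that the $j-1$ window messages lie in the interference of the receiver indexed $m+1$ beyond the window (the SCI analogue of ``distance at least $\tilde U+1$'' in SNC) --- exactly the structural correspondence you flagged as the main thing to verify.
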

\begin{proof}
We have 
\begin{align}
\label{bound5}
\nonumber
&\sum_{i=1}^K \text{dim}<\mathcal{V}_i,\mathcal{V}_{i+1},\ldots,\mathcal{V}_{i+j-2},\mathcal{V}_{i+j-2+m+1}>\\& =\underbrace{\sum_{i=1}^K \text{dim}<\mathcal{V}_i,\mathcal{V}_{i+1},\ldots ,\mathcal{V}_{i+j-2}>}_{\alpha_{j-1}}+\underbrace{\sum_{i=1}^K\text{dim}( \mathcal{V}_{i+j-2+m+1})}_{\alpha_1}. 
\end{align}

\eqref{bound5} follows from the fact that for the receiver $R_{i+j-2+m+1}$, the message symbols $x_i,x_{i+1},\ldots,x_{i+j-2}$ are in interference and hence $<\mathcal{V}_i,\mathcal{V}_{i+1}, \ldots \mathcal{V}_{i+j-2}> \cap \mathcal{V}_{i+j-2+m+1}=\phi$.
 
In \eqref{bound6}, we upperbound $$\sum_{i=1}^K \text{dim}<\mathcal{V}_i,\mathcal{V}_{i+1},\ldots,\mathcal{V}_{i+j-2},\mathcal{V}_{i+j-2+m+1}>$$ by using the submodular property of dimension function (for subspaces $\mathcal{V}_{i_1}$ and $\mathcal{V}_{i_2}$, we have dim($\mathcal{V}_{i_1})$+dim($\mathcal{V}_{i_2})$=dim($\mathcal{V}_{i_1} \cup \mathcal{V}_{i_2})$+dim($\mathcal{V}_{i_1}\cap \mathcal{V}_{i_2})$).
\begin{figure*}
\begin{align}
\label{bound6}
\nonumber
&\sum_{i=1}^K \text{dim}<\mathcal{V}_i,\mathcal{V}_{i+1},\ldots,\mathcal{V}_{i+j-2},\mathcal{V}_{i+j-2+m+1}> ~~~\leq \underbrace{\sum_{i=1}^K \text{dim}<\mathcal{V}_i,\mathcal{V}_{i+1},\ldots, \mathcal{V}_{i+j-2},\mathcal{V}_{i+j-1}>}_{\alpha_j}\\& \nonumber~~~~~~~~~~~~~~~~~~~~~~~~+
\sum_{i=1}^K \text{dim}<\mathcal{V}_{i+1},\mathcal{V}_{i+2},\ldots,\mathcal{V}_{i+j-1},\mathcal{V}_{i+j+m-1}>-
\underbrace{\sum_{i=1}^K \text{dim}<\mathcal{V}_{i+1},\mathcal{V}_{i+2},\ldots,\mathcal{V}_{i+j-1}>}_{\alpha_{j-1}} \\& \nonumber \leq \underbrace{\sum_{i=1}^K \text{dim}<\mathcal{V}_i,\mathcal{V}_{i+1},\ldots,\mathcal{V}_{i+j-2},\mathcal{V}_{i+j-1}>}_{\alpha_j}+ \underbrace{\sum_{i=1}^K \text{dim}<\mathcal{V}_{i+1},\mathcal{V}_{i+2},\ldots,\mathcal{V}_{i+j}>}_{\alpha_j}\\& \nonumber 
+\sum_{i=1}^K \text{dim}<\mathcal{V}_{i+2},\mathcal{V}_{i+3},\ldots,\mathcal{V}_{i+j},\mathcal{V}_{i+j+m-1}>-
\underbrace{\sum_{i=1}^K \text{dim}<\mathcal{V}_{i+2},\mathcal{V}_{i+3},\ldots,\mathcal{V}_{i+j}>}_{\alpha_{j-1}}-
\underbrace{\sum_{i=1}^K \text{dim}<\mathcal{V}_{i+1},\ldots,\mathcal{V}_{i+j-1}>}_{\alpha_{j-1}} \leq \\& \nonumber ~~~~~~~~~~~~~~~~\vdots~~~~~~~~~~~~~~~~~~~~~~~\vdots \\& \nonumber  \leq \underbrace{\underbrace{\sum_{i=1}^K \text{dim}<\mathcal{V}_i, \ldots,\mathcal{V}_{i+j-1}>}_{\alpha_j}+ \underbrace{\sum_{i=1}^K \text{dim}<\mathcal{V}_{i+1},\mathcal{V}_{i+2},\ldots,\mathcal{V}_{i+j}>}_{\alpha_j}+\ldots+\underbrace{\sum_{i=1}^K \text{dim}<\mathcal{V}_{i+m},\mathcal{V}_{i+m+1},\ldots,\mathcal{V}_{i+j+m-1}>}_{\alpha_j}}_{(m+1)~\text{terms}} \\& -
\underbrace{\underbrace{\sum_{i=1}^K \text{dim}<\mathcal{V}_{i+m},\mathcal{V}_{i+m+1},\ldots,\mathcal{V}_{i+j+m-2}>}_{\alpha_{j-1}}-\ldots -
\underbrace{\sum_{i=1}^K \text{dim}<\mathcal{V}_{i+2},\mathcal{V}_{i+3},\ldots,\mathcal{V}_{i+j}>}_{\alpha_{j-1}}-\underbrace{\sum_{i=1}^K \text{dim}<\mathcal{V}_{i+1},\ldots,\mathcal{V}_{i+j-1}>}_{\alpha_{j-1}}}_{m~\text{terms}}.
\end{align}
\end{figure*}
From \eqref{bound5} and \eqref{bound6}, we have 
\begin{align*}
\alpha_{j-1}+\alpha_1 \leq (m+1)\alpha_j-m\alpha_{j-1}.
\end{align*}
Hence, we have
\begin{align*}
(m+1)\alpha_j &\geq (m+1)\alpha_{j-1}+\alpha_1 \\& \geq (m+1)\alpha_{j-2}+2\alpha_1 \\& \geq (m+1)\alpha_{j-3}+3\alpha_1 \\& \geq \ldots \geq (m+1)\alpha_1+(j-1)\alpha_1.
\end{align*}
This proves Lemma \ref{cor2}.
\end{proof}
\begin{lemma}
\label{cor3}
For SUICP(SCI) with arbitrary $K,D,U$ and $m$, define $t$ and $j$ as follows,
\begin{align*}
&\text{if}~D~\text{mod}~(m+1) = 0 \\&
~~~~~~~~~~~~~t= \frac{D}{m+1}-1,~~j=m+1, \\&
\text{if}~D~\text{mod}~(m+1) \neq 0\\&
~~~~~~~~~~~~~t=\left \lfloor \frac{D}{m+1}\right \rfloor,~~j=D~\text{mod}~(m+1). \\&
\end{align*}
then, $\alpha_{D} \geq t\alpha_1+\alpha_j$.
\end{lemma}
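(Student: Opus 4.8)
The plan is to peel the bound down one index at a time. Concretely, I would first establish the one–step recursion
\[
(m+1)\,\alpha_{p+1}\;\ge\;(m+1)\,\alpha_p+\alpha_1\qquad\text{for every }p\text{ with }1\le p\le D,
\]
and then telescope it. The point of the definition of $t$ and $j$ is that in \emph{both} cases one has $D=t(m+1)+j$ with $1\le j\le m+1$, hence $\tfrac{D-j}{m+1}=t$; so summing the recursion over $p=j,j+1,\dots,D-1$ (all of which satisfy $p\le D$) gives $\alpha_D\ge\alpha_j+\tfrac{D-j}{m+1}\,\alpha_1=\alpha_j+t\,\alpha_1$, which is the claim. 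The corner case $t=0$ (i.e. $D\le m+1$) is immediate because then $j=D$ and the statement reads $\alpha_D\ge\alpha_D$.

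To prove the one–step recursion I would reuse the mechanism of the proof of Lemma \ref{cor2}, but place the auxiliary subspace $m+1$ positions \emph{before} the block, and use the \emph{downlink} interference set $\{x_{k+m+1},\dots,x_{k+m+D}\}$ in the role played there by the uplink set. First, the analogue of \eqref{bound5}:
\[
\sum_{i=1}^{K}\dim\langle \mathcal{V}_{i-m-1},\mathcal{V}_i,\mathcal{V}_{i+1},\dots,\mathcal{V}_{i+p-1}\rangle=\alpha_1+\alpha_p .
\]
This holds because for the receiver $R_{i-m-1}$ the messages $x_i,x_{i+1},\dots,x_{i+p-1}$ are precisely the first $p$ of its $D$ downlink–interference messages $x_i,\dots,x_{i+D-1}$, so $\{x_i,\dots,x_{i+p-1}\}\subseteq I_{i-m-1}$ as soon as $p\le D$; since decodability of $x_{i-m-1}$ forces $\mathcal{V}_{i-m-1}$ to intersect $\langle\mathcal{V}_\ell:\ell\in I_{i-m-1}\rangle$ trivially, a fortiori $\langle\mathcal{V}_i,\dots,\mathcal{V}_{i+p-1}\rangle\cap\mathcal{V}_{i-m-1}=\phi$, and using $\dim\langle A,B\rangle=\dim A+\dim B$ for $A\cap B=\phi$ and summing over $i$ yields the display.

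Second, I would upper–bound the same quantity by the mirror image of the submodularity chain in \eqref{bound6}: the window $\langle\mathcal{V}_{i-m-1},\mathcal{V}_i,\dots,\mathcal{V}_{i+p-1}\rangle$ has a single gap, the $m$ consecutive positions $i-m,\dots,i-1$, and applying the submodular inequality repeatedly — sliding the length-$p$ block leftwards one position at a time, $m$ steps in all, to close that gap — produces, exactly as the chain in \eqref{bound6} does, the bound
\[
\sum_{i=1}^{K}\dim\langle \mathcal{V}_{i-m-1},\mathcal{V}_i,\dots,\mathcal{V}_{i+p-1}\rangle\;\le\;(m+1)\,\alpha_{p+1}-m\,\alpha_p .
\]
Comparing the two displays gives $\alpha_1+\alpha_p\le(m+1)\alpha_{p+1}-m\alpha_p$, i.e. the one–step recursion, and then telescoping finishes the proof.

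The step I expect to be the main obstacle — and the only genuinely new point over Lemma \ref{cor2} — is the disjointness claim $\langle\mathcal{V}_i,\dots,\mathcal{V}_{i+p-1}\rangle\cap\mathcal{V}_{i-m-1}=\phi$: it must hold for \emph{all} $p\le D$, not just for $p\le m+1$ (as in Lemma \ref{cor2}) nor only for $p\le U+1$, and this is exactly why one has to read the $p$-block off the longer, downlink side $\{x_{k+m+1},\dots,x_{k+m+D}\}$ of the interference rather than the uplink side used in \eqref{bound5}. Once the indexing of the auxiliary subspace and the block is set up accordingly, the submodularity chain is a routine mirror of \eqref{bound6}, and the arithmetic identity $\tfrac{D-j}{m+1}=t$ is a one-line check in each of the two cases of the definition of $t,j$.
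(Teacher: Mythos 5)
Your proof is correct (within the paper's linear-subspace framework, where decodability at $R_k$ forces $\mathcal{V}_k\cap\langle\mathcal{V}_\ell:\ell\in\mathcal{I}_k\rangle=\{0\}$), but it takes a genuinely different route from the paper. The paper proves Lemma \ref{cor3} by a one-shot selection inside each length-$D$ window: it keeps the $j$-block $\mathcal{V}_i,\ldots,\mathcal{V}_{i+j-1}$ together with the $t$ isolated subspaces at positions $i+j-1+s(m+1)$, $s=1,\ldots,t$, and asserts that the dimension of their joint span splits additively, which immediately gives $t\alpha_1+\alpha_j$; no recursion and no submodularity appear in that proof. You instead extend the mechanism of Lemma \ref{cor2}: by placing the auxiliary subspace $m+1$ positions \emph{before} the $p$-block, the entire block lies inside the forward interference $\{x_{k+m+1},\dots,x_{k+m+D}\}$ of the single receiver $R_{i-m-1}$, so the disjointness needed for your analogue of \eqref{bound5} holds for every $p\le D$ (not just $p\le m+1$), and the mirrored submodular chain of \eqref{bound6} then yields $(m+1)\alpha_{p+1}\ge(m+1)\alpha_p+\alpha_1$, which telescopes to the claim because $D-j=t(m+1)$ in both cases of the definition of $t,j$. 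What your route buys: the single recursion valid for all $p\le D$ subsumes Lemma \ref{cor2} (telescoping from $p=1$ gives $\alpha_j\ge\frac{m+j}{m+1}\alpha_1$) and delivers the bound $\alpha_D\ge\frac{D+m}{m+1}\alpha_1$ of Theorem \ref{lowerbound} in one pass; moreover, your key disjointness step is justified by one receiver's decodability constraint, whereas the paper's additivity claim is only supported by per-receiver intersections (each relevant receiver gives $\mathcal{V}_q\cap\langle\mathcal{V}_{p_1},\dots,\mathcal{V}_{p_t}\rangle=\{0\}$, which does not by itself yield additivity with the whole block span), so your argument is, if anything, the more carefully justified of the two. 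What the paper's route buys is brevity: it needs no submodularity chain for this lemma, only the trivial-intersection property applied once per window.
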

\begin{proof}
\begin{align*}
\alpha_{D}&=\alpha_{tm+j}\\&=\sum_{i=1}^K \text{dim}<\mathcal{V}_i,\mathcal{V}_{i+1},\ldots,\mathcal{V}_{tm+j-1}>\\& \geq \sum_{i=1}^K \text{dim}<\mathcal{V}_i,\mathcal{V}_{i+1},\ldots,\mathcal{V}_{i+j-1},\mathcal{V}_{i+j-1+m+1}, \\& ~~~~~~~~~ \mathcal{V}_{i+j-1+2(m+1)},\ldots,\mathcal{V}_{i+j-1+t(m+1)}>\\&=\sum_{i=1}^K \Big\{\text{dim}<\mathcal{V}_i,\mathcal{V}_{i+1},\ldots,\mathcal{V}_{i+j-1}>\\&~~~~~~~~~~+\text{dim}(\mathcal{V}_{i+j-1+m+1})+    \text{dim}(\mathcal{V}_{i+j-1+2(m+1)})\\&~~~~~~~~~~+\ldots+\text{dim}(\mathcal{V}_{i+j-1+t(m+1)})\Big\}\\&= t\alpha_1+\alpha_j.
\end{align*}
\end{proof}
\begin{theorem}
\label{lowerbound}
Consider a SUICP(SCI) with arbitrary $K,D,U$ and $m$. The broadcast rate $\beta$ of this ICP is lowerbounded by \begin{align*}
\beta \geq 1 + \frac{D+m}{m+1}.
\end{align*}
\end{theorem}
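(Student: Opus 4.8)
The plan is to derive the bound by combining the per-receiver counting constraint with the chain of inequalities supplied by Lemmas~\ref{cor2} and~\ref{cor3}. First I would fix a $t$-dimensional vector linear index code of length $N$, let $\mathcal{V}_k$ be the subspace assigned to $x_k$, and set $\alpha_j$ as in the statement before Lemma~\ref{bound3}. Since every receiver must be able to decode its wanted message from the $N$ transmitted symbols together with its side-information, the standard decodability (dimension-counting) argument gives $N \geq \alpha_1 \cdot \frac{\text{something}}{\ldots}$; concretely, one shows $Nt \geq \alpha_{D+m+1} - (\text{contribution already known})$, or more directly that $\alpha_{D+1}$ (the dimension of the span over a window containing a receiver's wanted message and all its ``after'' interference) forces $N \geq$ the corresponding count. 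The cleanest route is: each receiver $R_k$ sees $x_k$ plus interference $\mathcal{I}_k$; decodability forces $\dim\langle \mathcal{V}_{k},\ldots,\mathcal{V}_{k+m+D}\rangle$ restricted appropriately to be recoverable, which when summed over all $k$ and normalized yields $N \geq \alpha_{D+m+1}/(Kt) \cdot (\ldots)$. I would phrase it as: the broadcast length $N$ satisfies $N \geq \alpha_{m+1+D} - \alpha_{m+1+D} + \ldots$; the key is the clean inequality $Nt K \geq \alpha_{D+2m+1}$ is too strong, so instead I use that for each $k$, $x_k$ together with $\{x_{k+m+1},\dots,x_{k+m+D}\}$ (all in $\mathcal{I}_k$) must be ``separated'' by the code, giving $\sum_k \dim\langle\mathcal{V}_k,\mathcal{V}_{k+m+1},\dots,\mathcal{V}_{k+m+D}\rangle = \alpha_1 + \alpha_D$ after accounting for the gap, hence $NKt \geq t(\alpha_1 + \alpha_D)$, i.e. $N \geq (\alpha_1+\alpha_D)/(Kt)$.

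Next I would feed the lower bounds on $\alpha_j$ into this. By Lemma~\ref{cor3}, $\alpha_D \geq t\alpha_1 + \alpha_j$ with $t,j$ as defined there, and by Lemma~\ref{cor2}, $\alpha_j \geq \frac{m+j}{m+1}\alpha_1$ for $j \in \{2,\dots,m+1\}$ (and $\alpha_1 = \alpha_1$ trivially covers the degenerate case $j=1$). Substituting, $\alpha_1 + \alpha_D \geq \alpha_1 + t\alpha_1 + \frac{m+j}{m+1}\alpha_1 = \alpha_1\left(1 + t + \frac{m+j}{m+1}\right)$. In the case $D \bmod (m+1) = 0$ we have $t = \frac{D}{m+1}-1$ and $j=m+1$, so $1 + t + \frac{m+j}{m+1} = 1 + \frac{D}{m+1}-1 + \frac{2m+1}{m+1} = \frac{D}{m+1} + \frac{2m+1}{m+1} = \frac{D+2m+1}{m+1} = 1 + \frac{D+m}{m+1}$. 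In the case $D \bmod (m+1) = j \neq 0$ we have $t = \lfloor D/(m+1)\rfloor$ so $tm + \dots$ — rather, $t(m+1) + j = D$, hence $1 + t + \frac{m+j}{m+1} = \frac{(m+1)(1+t) + m + j}{m+1} = \frac{(m+1) + t(m+1) + m + j}{m+1} = \frac{(m+1) + D + m}{m+1} = 1 + \frac{D+m}{m+1}$. Either way the multiplier is exactly $1 + \frac{D+m}{m+1}$.

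Finally, since $\alpha_1 = \sum_{k} \dim\mathcal{V}_k \geq Kt$ (each message occupies a subspace of dimension at least $t$, otherwise the corresponding receiver cannot recover $x_k$), I conclude $N \geq \frac{\alpha_1}{Kt}\left(1 + \frac{D+m}{m+1}\right) \geq 1 + \frac{D+m}{m+1}$, and taking the infimum of $N/t$ over all valid vector linear codes — and invoking that linear codes suffice to approach $\beta$ in this counting framework, or that the dimension-counting outer bound of \cite{MCJ} applies to all (possibly nonlinear) codes via entropy in place of dimension — gives $\beta \geq 1 + \frac{D+m}{m+1}$.

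I expect the main obstacle to be stating the first inequality $NKt \geq t(\alpha_1 + \alpha_D)$ cleanly and rigorously: one must argue carefully that for each receiver the code symbols together with the side-information subspaces span a space containing $\mathcal{V}_k$ independently of the interference subspaces $\mathcal{V}_{k+m+1},\dots,\mathcal{V}_{k+m+D}$, so that summing dimensions telescopes correctly — and, if one wants the bound for nonlinear codes, to replace dimensions by (normalized) entropies and re-verify that Lemmas~\ref{cor2} and~\ref{cor3}, whose proofs use only submodularity, carry over verbatim. The arithmetic verification that the two cases in Lemma~\ref{cor3} both collapse to the single clean expression $1+\frac{D+m}{m+1}$ is routine but worth displaying.
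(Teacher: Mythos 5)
Your proposal is correct in substance and follows essentially the same route as the paper: combine Lemmas \ref{cor2} and \ref{cor3} to get $\alpha_D \geq \frac{D+m}{m+1}\alpha_1$ (your two-case arithmetic check, and your remark covering $j=1$, match what the paper leaves implicit), then sum the per-receiver constraint that the wanted subspace must be independent of the interference span and normalize, exactly as in the paper's $\beta \geq (\alpha_D+\alpha_1)/\alpha_1$. One small slip to fix in the write-up: the counting inequality is $KN \geq \alpha_1+\alpha_D$ (not $N \geq (\alpha_1+\alpha_D)/(Kt)$), so that together with $\alpha_1 \geq Kt$ it bounds $N/t$, which is what $\beta$ requires, rather than only $N$.
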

\begin{proof}
From Lemma \ref{cor2} and \ref{cor3}, we have 
\begin{align*}
\alpha_{D}  \geq t\alpha_1+\alpha_j & \geq  t\alpha_1 + \frac{m+j}{m+1}\alpha_1 \\&= \frac{t(m+1)+j+m}{m+1}\alpha_1=\frac{D+m}{m+1}\alpha_1.
\end{align*}

Then the dimension of the desired message plus the dimension of the interference at all receivers should be less than or equal to 
\begin{align*}
\alpha_{D}+\alpha_1  \geq \frac{D+m}{m+1}\alpha_1 + \alpha_1.
\end{align*}

The broadcast rate should be greater than or equal to the number of dimensions required by wanted message and interference per message symbol. Hence, we have
\begin{align*}
\beta \geq \frac{\alpha_{D}+\alpha_1}{\alpha_1}  \geq 1+ \frac{D+m}{m+1}.
\end{align*}
This completes the proof.
\end{proof}

\section{Near-Optimal Vector Linear Index Codes of SUICP(SCI)}
\label{sec3}
In this section, we define a set $\mathbf{S}_{K,D,U,m}$ of $2$-tuples such that for every $(a,b) \in \mathcal{\mathbf{S}}_{K,D,U,m}$, the rate $1+\frac{D+m+\frac{a}{b}}{m+1}$ is achievable by using AIR matrices with vector linear index codes over every field. Constructing vector linear index codes for SUICP(SCI) follows from the two steps given below.
\begin{itemize}
\item First we convert the SUICP(SCI) problem into SUICP(SNI) problem by converting the $m+1$ message symbols into one extended message symbol.
\item We construct vector linear index codes for SUICP(SCI) by constructing vector linear index codes for SUICP(SNI).
\end{itemize}

In \cite{VaR5}, we proved the following interference alignment Lemma for SUICP(SNI).
\begin{lemma}
\label{lemma1}
Consider a SUICP(SNI) with $K$ messages, $D$ and $U$ interfering messages after and before the desired message. Let $\mathbf{L}$ be a $t$-dimensional vector encoding matrix (not necessarily an AIR matrix) of size $Kt \times N$ for this index coding problem. Let the vector index code be generated by multiplying $Kt$ message symbols $[x_{0,1}~x_{0,2}~\ldots~x_{0,t}~x_{1,1}~x_{1,2}~\ldots~x_{1,t}~\ldots~x_{K-1,t}]$ with the encoding matrix $\mathbf{L}$. Let $\mathbf{S}_k$ be the row space of $\tilde{\mathbf{S}}_k$ and $\mathbf{S}_{k \setminus i}$ be the subspace spanned by the $t-1$ rows in $\tilde{\mathbf{S}}_k$ after deleting the row $L_{kt+i}$. The receiver $R_k$ can decode $x_{k,1},x_{k,2},\ldots,x_{k,t}$ if and only if 
\begin{align} 
\label{sind}
L_{kt+i} \notin  < \mathbf{S}_{{\cal{I}}_k},\mathbf{S}_{k \setminus i} >
\end{align}
for $k\in [0:K-1]$, $i \in [0:t-1]$.
\end{lemma}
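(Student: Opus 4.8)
The plan is to argue directly from what receiver $R_k$ observes. First I would write the transmitted word as $\mathbf{c}=\mathbf{xL}=\sum_{j=0}^{K-1}x_j\tilde{\mathbf{S}}_j$ and use $\mathcal{K}_k=(\mathcal{I}_k\cup\{x_k\})^c$: receiver $R_k$ knows every $x_j$ with $j\notin\mathcal{I}_k\cup\{k\}$, so it may cancel $\sum_{j\notin\mathcal{I}_k\cup\{k\}}x_j\tilde{\mathbf{S}}_j$ from $\mathbf{c}$ and is left with
\[
\mathbf{y}_k=\sum_{j\in\mathcal{I}_k}x_j\tilde{\mathbf{S}}_j+\sum_{l=0}^{t-1}x_{k,l+1}L_{kt+l}.
\]
From $R_k$'s viewpoint the symbols $x_j$ $(j\in\mathcal{I}_k)$ and $x_{k,1},\dots,x_{k,t}$ are all unknown, so $R_k$ recovers the component $x_{k,i+1}$ iff its value is the same for every assignment of these unknowns that produces the observed $\mathbf{y}_k$; and $R_k$ recovers the whole message $x_k$ iff it recovers each component $x_{k,i+1}$, $i\in[0:t-1]$.

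Next I would pass to differences. Subtracting two assignments that yield the same $\mathbf{y}_k$, the component $x_{k,i+1}$ is decodable (for all choices of the message/interference symbols) if and only if every relation $\sum_{j\in\mathcal{I}_k}\delta_j\tilde{\mathbf{S}}_j+\sum_{l=0}^{t-1}\delta_{k,l+1}L_{kt+l}=\mathbf{0}$, with $\delta_j\in\mathbb{F}_q^t$ and $\delta_{k,l+1}\in\mathbb{F}_q$, forces $\delta_{k,i+1}=0$. This is the form in which I would establish the equivalence with \eqref{sind}, noting that the rows of the matrices $\tilde{\mathbf{S}}_j$, $j\in\mathcal{I}_k$, span $\langle\mathbf{S}_{\mathcal{I}_k}\rangle$ and the rows $L_{kt+l}$, $l\neq i$, span $\mathbf{S}_{k\setminus i}$.

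For the forward direction, assume $L_{kt+i}\notin\langle\mathbf{S}_{\mathcal{I}_k},\mathbf{S}_{k\setminus i}\rangle$; if some relation as above had $\delta_{k,i+1}\neq0$, dividing by $-\delta_{k,i+1}$ would express $L_{kt+i}$ as a combination of the rows of the $\tilde{\mathbf{S}}_j$ $(j\in\mathcal{I}_k)$ and of $L_{kt+l}$ $(l\neq i)$, i.e. $L_{kt+i}\in\langle\mathbf{S}_{\mathcal{I}_k},\mathbf{S}_{k\setminus i}\rangle$, a contradiction; hence $\delta_{k,i+1}=0$ and $x_{k,i+1}$ is decodable. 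For the converse, if $L_{kt+i}\in\langle\mathbf{S}_{\mathcal{I}_k},\mathbf{S}_{k\setminus i}\rangle$ I would write $L_{kt+i}=\sum_{j\in\mathcal{I}_k}\gamma_j\tilde{\mathbf{S}}_j+\sum_{l\neq i}c_lL_{kt+l}$ with $\gamma_j\in\mathbb{F}_q^t$, $c_l\in\mathbb{F}_q$; rearranging gives a nontrivial relation with $\delta_{k,i+1}=-1\neq0$, so two assignments differing only in $x_{k,i+1}$ produce the same $\mathbf{y}_k$ and $x_{k,i+1}$ is not decodable. Combining the two directions over all $i\in[0:t-1]$ yields the claim for the full message $x_k$.

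The one point that needs care is the bookkeeping in the converse: one must recognize that the other components $x_{k,l}$, $l\neq i$, of the \emph{same} wanted message act exactly like extra interference, so the obstruction subspace is $\langle\mathbf{S}_{\mathcal{I}_k},\mathbf{S}_{k\setminus i}\rangle$ rather than just $\langle\mathbf{S}_{\mathcal{I}_k}\rangle$. Once that is in place the proof is routine linear algebra over $\mathbb{F}_q$ and, in fact, uses nothing about SUICP(SNI) beyond $\mathcal{K}_k=(\mathcal{I}_k\cup\{x_k\})^c$.
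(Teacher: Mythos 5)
Your argument is correct and complete: cancelling the known side-information and reducing decodability of each component $x_{k,i+1}$ to the span condition $L_{kt+i}\notin\langle\mathbf{S}_{\mathcal{I}_k},\mathbf{S}_{k\setminus i}\rangle$ is exactly the standard criterion this lemma encodes, and the paper itself gives no proof here (it imports the lemma from \cite{VaR5}), so there is nothing to contrast with. One small wording slip in your converse: the two indistinguishable assignments need not differ \emph{only} in $x_{k,i+1}$ (the relation generally also perturbs interference symbols and other components of $x_k$), but since all of those are unknown to $R_k$, the non-decodability conclusion stands as you argue.
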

\begin{corollary}
\label{cor1}
Consider the SUICP(SNI) with $K$ messages, $D$ interference after and $U$ interference before the desired message. Let $\mathbf{L}$ be a scalar linear encoding matrix (not necessarily an AIR matrix) of size $K \times N$ for this index coding problem. The receiver $R_k$ can decode $x_k$ if and only if 
\begin{align} 
\label{sind2}
L_{k} \notin  < L_{{\cal{I}}_k}>
\end{align}
for $k\in [0:K-1]$.
\end{corollary}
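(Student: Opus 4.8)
The plan is to obtain Corollary \ref{cor1} as the scalar specialization ($t=1$) of Lemma \ref{lemma1}. In the scalar setting each message $x_k$ is assigned a single encoding row $L_k \in \mathbb{F}_q^N$, so the matrix $\tilde{\mathbf{S}}_k$ consists of exactly one row and $\mathbf{S}_k = \langle L_k\rangle$. The only admissible index is $i=0$, and deleting the unique row of $\tilde{\mathbf{S}}_k$ leaves the trivial subspace $\mathbf{S}_{k \setminus 0} = \{0\}$. Substituting $t=1$ and $i=0$ into the condition \eqref{sind} of Lemma \ref{lemma1} collapses $\langle \mathbf{S}_{\mathcal{I}_k},\mathbf{S}_{k\setminus i}\rangle$ to $\langle L_{\mathcal{I}_k}\rangle$, so that \eqref{sind} becomes precisely \eqref{sind2}. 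This is essentially the whole proof.

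For completeness I would also spell out the direct argument, since it is short and self-contained. Receiver $R_k$ observes the codeword $c=\mathbf{xL}$ and knows $x_j$ for every $j\in\mathcal{K}_k$; subtracting the known contributions $\sum_{j\in\mathcal{K}_k}x_jL_j$ it is left with the residual $\sum_{j\in\mathcal{I}_k\cup\{k\}}x_jL_j$. Hence $R_k$ can recover $x_k$ iff $x_k$ is determined by this residual for all choices of the interfering symbols. If $L_k\in\langle L_{\mathcal{I}_k}\rangle$, write $L_k=\sum_{j\in\mathcal{I}_k}\mu_jL_j$; then changing $x_k$ by $\delta$ while changing each $x_j$, $j\in\mathcal{I}_k$, by $-\delta\mu_j$ leaves the residual unchanged, so $x_k$ is not decodable. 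Conversely, if $L_k\notin\langle L_{\mathcal{I}_k}\rangle$ there is a linear functional $\phi$ on $\mathbb{F}_q^N$ vanishing on $L_{\mathcal{I}_k}$ with $\phi(L_k)\neq 0$; applying $\phi$ to the residual annihilates the interference terms and yields $\phi(L_k)\,x_k$ (here $x_k$ is a scalar because $t=1$), from which $x_k$ is recovered.

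I do not anticipate a genuine obstacle here: the statement is a one-line consequence of Lemma \ref{lemma1}, and the only point worth checking carefully is the bookkeeping that, when $t=1$, the ``self-interference'' subspace $\mathbf{S}_{k\setminus i}$ is empty and therefore drops out of \eqref{sind}. The direct derivation above merely re-proves Lemma \ref{lemma1} in this special case and could be omitted if one simply cites the lemma.
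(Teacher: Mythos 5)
Your proposal is correct and matches the paper's intent: the paper states Corollary \ref{cor1} without a separate proof, precisely because it is the scalar ($t=1$) specialization of Lemma \ref{lemma1}, in which the self-interference subspace $\mathbf{S}_{k\setminus i}$ becomes trivial and the condition reduces to $L_k \notin \langle L_{\mathcal{I}_k}\rangle$. Your supplementary direct argument is sound but, as you note, simply re-derives the lemma in this special case and is not needed.
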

In \cite{VaR5}, we proved the following property of an AIR matrix. We use Lemma \ref{lemma3} in the proof of Theorem \ref{thm1}.
\begin{lemma}
\label{lemma3}
In the AIR matrix of size $g \times h$ for some positive integers $g$ and $h \leq g$, every row $L_k$ is not in the span of $h-1$ rows above and $\text{gcd}(g,h)-1$ rows below $L_k$ for $k \in [0:g-1]$.
\end{lemma}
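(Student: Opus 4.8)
Write $d=\text{gcd}(g,h)$ and read all row indices modulo $g$. The plan is to reduce everything to the defining property of AIR matrices (any $n$ consecutive rows of $\mathbf{L}_{m\times n}$ are independent over every field) via the structural identity
\begin{align*}
\mathbf{L}_{g\times h}=\mathbf{L}_{(g/d)\times(h/d)}\otimes\mathbf{I}_d,
\end{align*}
which I would either quote from \cite{VaR2} or establish by induction on the depth of the Euclidean recursion in Algorithm~\ref{algo2}: if $h\mid g$ both sides are $\tfrac{g}{h}$ stacked copies of $\mathbf{I}_h$; in general the recursions $g=qh+r,\ h=q'r+r',\ldots$ for $(g,h)$ and for $(g/d,h/d)$ have the same quotient sequence (since $d$ divides every remainder the Euclidean algorithm produces), while each identity block used to fill the matrix satisfies $\mathbf{I}_{cn\times n}=\mathbf{I}_{c(n/d)\times(n/d)}\otimes\mathbf{I}_d$ and likewise for its transpose; hence the block Algorithm~\ref{algo2} writes at each stage is exactly the $\otimes\mathbf{I}_d$ blow-up of the block it writes in the smaller instance. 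The consequence I will use is: for $i\in[0:d-1]$ every row $L_k$ with $k\equiv i\pmod d$ lies in the coordinate subspace $W_i=\langle e_i,e_{i+d},\ldots,e_{i+h-d}\rangle$, these satisfy $\mathbb{F}^h=W_0\oplus\cdots\oplus W_{d-1}$, and ``de-interleaving'' (restricting to coordinates $\equiv i\pmod d$ and to rows $\equiv i\pmod d$, keeping order) maps the corresponding subsequence of rows of $\mathbf{L}_{g\times h}$ exactly onto the rows of $\mathbf{L}_{(g/d)\times(h/d)}$.

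Granting this, fix $k$ and partition the window rows $\{L_{k-h+1},\ldots,L_{k-1}\}\cup\{L_{k+1},\ldots,L_{k+d-1}\}$ by residue modulo $d$. None of the $d-1$ rows below $L_k$ is $\equiv k\pmod d$, while among the $h-1$ rows above exactly those at positions $k-d,k-2d,\ldots,k-(\tfrac hd-1)d$ are $\equiv k\pmod d$ (there are $h/d-1$ of them, using $d\mid h$). Every window row not $\equiv k\pmod d$ lies in $\bigoplus_{j\neq i}W_j$, which is a complement of $W_i$ (where $i\equiv k\pmod d$); since $L_k\in W_i$ and $W_i\cap\big(\bigoplus_{j\neq i}W_j\big)=\{0\}$, it follows that $L_k$ lies in the span of the window rows if and only if $L_k\in\langle L_{k-d},L_{k-2d},\ldots,L_{k-(h/d-1)d}\rangle$.

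Finally, under the de-interleaving map the vectors $L_k,L_{k-d},\ldots,L_{k-(h/d-1)d}$ correspond to $h/d$ consecutive rows of $\mathbf{L}_{(g/d)\times(h/d)}$, which are linearly independent over every field by the defining AIR property of \cite{VaR2}; hence $L_k\notin\langle L_{k-d},\ldots,L_{k-(h/d-1)d}\rangle$, and the previous paragraph then yields $L_k\notin\langle L_{k-h+1},\ldots,L_{k-1},L_{k+1},\ldots,L_{k+d-1}\rangle$, as claimed. (If the statement is read non-cyclically, then for $k$ near the top of the matrix there are simply fewer than $h-1$ rows above, hence fewer than $h/d-1$ same-residue rows, and the relevant span only shrinks, so the conclusion is unchanged.) I expect the one genuinely technical point to be the structural identity in the first paragraph---checking that the $\otimes\mathbf{I}_d$ blow-up commutes with every vertical/horizontal splitting step of Algorithm~\ref{algo2} and keeping the interleaved row/column indexing straight; everything downstream of it is a short direct-sum argument plus a single appeal to the basic AIR property.
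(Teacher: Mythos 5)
You should first note that the paper itself contains no proof of Lemma~\ref{lemma3}: it is imported verbatim from \cite{VaR5}, where it is established by a direct analysis of the AIR structure (the $\lambda_i,\beta_i$ chain of \eqref{chain}). So your route is necessarily different, and it has real merit: the structural identity $\mathbf{L}_{g\times h}=\mathbf{L}_{(g/d)\times(h/d)}\otimes\mathbf{I}_d$ with $d=\gcd(g,h)$ is in fact correct (the Euclidean quotient sequences of $(g,h)$ and $(g/d,h/d)$ coincide, and every block Algorithm~\ref{algo2} writes is the $d$-fold blow-up of the corresponding block of the smaller instance; this checks out on, e.g., $\mathbf{L}_{18\times 12}=\mathbf{L}_{3\times2}\otimes\mathbf{I}_6$ and $\mathbf{L}_{10\times 6}=\mathbf{L}_{5\times3}\otimes\mathbf{I}_2$), and your projection argument onto $W_i$ correctly shows that $L_k$ lies in the span of the window rows only if it lies in the span of the same-residue rows $L_{k-d},\ldots,L_{k-(h/d-1)d}$.

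The gap is in the very last appeal. You read all indices modulo $g$ (as you must: Theorem~\ref{thm1} uses the lemma cyclically), and under de-interleaving the rows $L_k,L_{k-d},\ldots,L_{k-(h/d-1)d}$ become $h/d$ \emph{cyclically} consecutive rows of the coprime matrix $\mathbf{L}_{(g/d)\times(h/d)}$; whenever $k<h-d$ this window wraps around the bottom/top of that matrix. The property you invoke from \cite{VaR2} --- ``any $n$ adjacent rows are linearly independent'' as stated in this paper --- is a statement about contiguous rows, not wrap-around windows, and independence of a wrap-around window (equivalently, ``every row of a coprime AIR matrix is not in the span of the $h'-1$ rows cyclically above it'') is precisely the $\gcd=1$ case of Lemma~\ref{lemma3} itself. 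In other words, your argument is a clean reduction of the general-gcd statement to the coprime statement, but the coprime statement --- which amounts to showing that the bottom rows of the AIR matrix, restricted to the trailing coordinates, remain independent, i.e., the content that \cite{VaR5} proves by chasing the recursive structure --- is left resting on a citation that does not obviously supply it. Your parenthetical about the non-cyclic reading does not repair this, because the non-cyclic reading is not the one the application in Theorem~\ref{thm1} requires. To close the proof you must either verify that the adjacency-independence in \cite{VaR2} is proved in the cyclic sense, or supply a separate argument for the wrap-around windows of the coprime AIR matrix.
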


\begin{definition}
\label{def1}
Consider a SUICP(SCI) with arbitrary $K,D,U$ and $m$. Define the set $\mathbf{S}_{K,D,U,m}$ as  
\begin{align}
\label{ab}
\nonumber
\mathbf{S}_{K,D,U,m}=\{(a,b):\text{gcd}(bK,b(D+&2m+1)+a)\geq \\& b(U+2m+1)\}
\end{align}
for $a \in Z_{\geq 0}$ and $b \in Z_{>0}$. 
\end{definition}
In Theorem \ref{thm1}, we convert the SUICP(SCI) into SUICP(SNI) and give the construction of near-optimal vector linear index codes for SUICP(SCI) with arbitrary $K,U,D$ and $m$.

\begin{theorem}
\label{thm1}
Given arbitrary positive integers $K,D,U$ and $m$, consider the SUICP(SCI) with $K$ messages $\{x_0,x_1,\cdots,x_{K-1}\}$ and the receivers being $K,$ and the receiver $R_k$ for $k \in [0:K-1]$ wanting the message $x_k$ and having the interference given by 
\begin{align}
\label{thm1eq}
\nonumber
{\cal I}_k= &\{x_{k-U-m},\dots,x_{k-m-2},x_{k-m-1}\}~\cup \\& \{x_{k+m+1},x_{k+m+2},\dots,x_{k+m+D}\}.
\end{align}
Let $(a,b) \in \mathbf{S}_{K,D,U,m}$ and $x_k=(x_{k,1},x_{k,2},\cdots,x_{k,b(m+1)})$ be the message vector wanted by the $k$th receiver $R_k$, where $x_k \in \mathbb{F}_q^{b(m+1)}, x_{k,i} \in \mathbb{F}_q$ for every $k \in [0:K-1]$ and $i \in [1:b(m+1)]$. Let
\begin{align}
\label{thm1eq6}
y_{k,j}=\sum_{i=1}^{m+1} x_{k+1-i,(j-1)(m+1)+i} 
\end{align}
for $k \in [0:K-1]$ and $j \in [1:b]$. Let $\mathbf{L}$ be the AIR matrix of size $Kb \times (b(D+2m+1)+a)$ and $L_k$ be the $k$th row of $\mathbf{L}$ for every $k \in [0:Kb-1]$. The $b(m+1)$ dimensional vector linear index code for the given SUICP(SCI) is given by 
\begin{align}
\label{thm1eq5}
[c_0~c_1~\ldots~c_{b(D+2m+1)+a-1}]=\sum_{k=0}^{K-1}\sum_{j=1}^{b}y_{k,j}L_{kb+j-1}.
\end{align} 
\end{theorem}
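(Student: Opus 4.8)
The plan is to peel the definition \eqref{thm1eq6} of the $y_{k,j}$ off the code \eqref{thm1eq5}, so as to express the index code directly in the original coordinates $x_{p,\cdot}$, and then to recognise the resulting encoding submatrix of each message $x_p$ as a block of consecutive rows of the AIR matrix $\mathbf{L}$; the decoding condition of Lemma \ref{lemma1} together with the adjacent-row-independence property of Lemma \ref{lemma3} will then finish the job. \textbf{First,} substituting \eqref{thm1eq6} into \eqref{thm1eq5} and reindexing the triple sum by $p=k+1-i$ (indices of $x$ read modulo $K$, indices of $\mathbf{L}$ read modulo $Kb$) yields
\[
[c_0~c_1~\ldots~c_{b(D+2m+1)+a-1}]=\sum_{p=0}^{K-1}\sum_{j=1}^{b}\sum_{i=1}^{m+1}x_{p,(j-1)(m+1)+i}\;L_{(p+i-1)b+j-1}.
\]
Thus the coordinate $x_{p,(j-1)(m+1)+i}$ is carried by the single row $L_{(p+i-1)b+j-1}$, and as $(i,j)$ runs over $[1:m+1]\times[1:b]$ the index $(p+i-1)b+j-1$ runs bijectively over the $b(m+1)$ consecutive integers $pb,pb+1,\ldots,pb+b(m+1)-1$. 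Hence, up to a permutation of its rows, the submatrix $\tilde{\mathbf{S}}_p$ of the constructed $b(m+1)$-dimensional code is precisely the block $L_{pb},L_{pb+1},\ldots,L_{pb+b(m+1)-1}$ of $\mathbf{L}$, so that $\mathbf{S}_p=\langle L_{pb},\ldots,L_{pb+b(m+1)-1}\rangle$.

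\textbf{Second,} I would apply the decoding criterion of Lemma \ref{lemma1}: the receiver $R_k$ recovers $x_k$ if and only if every row $L_\ell$ with $\ell\in[kb:kb+b(m+1)-1]$ satisfies $L_\ell\notin\langle\mathbf{S}_{\mathcal{I}_k},\mathbf{S}_{k\setminus L_\ell}\rangle$. By the first step and the interference pattern \eqref{thm1eq}, the rows spanning $\mathbf{S}_{\mathcal{I}_k}$ for the ``before'' interferers $x_{k-U-m},\ldots,x_{k-m-1}$ are exactly those indexed by $[(k-U-m)b:kb-1]$, those for the ``after'' interferers $x_{k+m+1},\ldots,x_{k+m+D}$ are exactly $[(k+m+1)b:(k+2m+D+1)b-1]$, and $\mathbf{S}_{k\setminus L_\ell}$ is spanned by $[kb:kb+b(m+1)-1]\setminus\{\ell\}$; consequently $\langle\mathbf{S}_{\mathcal{I}_k},\mathbf{S}_{k\setminus L_\ell}\rangle$ is spanned by the rows of $\mathbf{L}$ indexed by the contiguous window $[(k-U-m)b:(k+2m+D+1)b-1]\setminus\{\ell\}$.

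\textbf{Third,} I would count and invoke Lemma \ref{lemma3}. Writing $\ell=kb+s$ with $0\le s\le b(m+1)-1$, the window contains $(U+m)b+s\le b(U+2m+1)-1$ rows on the ``before'' side of $L_\ell$ and $(m+D)b+b(m+1)-1-s\le b(D+2m+1)-1$ rows on the ``after'' side. Now apply Lemma \ref{lemma3} to $\mathbf{L}$, which has size $g\times h$ with $g=Kb$ and $h=b(D+2m+1)+a$: it supplies two independence budgets for $L_\ell$, one of size $h-1=b(D+2m+1)+a-1$ and one of size $\gcd(g,h)-1$, on the two sides of $L_\ell$. Assigning the ``after'' side (the larger one, by $U\le D$, with at most $b(D+2m+1)-1$ rows) to the $h-1$ budget and the ``before'' side (at most $b(U+2m+1)-1$ rows) to the $\gcd(g,h)-1$ budget, both inclusions hold: the first because $a\ge 0$, the second because $(a,b)\in\mathbf{S}_{K,D,U,m}$ forces $\gcd(bK,b(D+2m+1)+a)\ge b(U+2m+1)$. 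Hence $L_\ell\notin\langle\mathbf{S}_{\mathcal{I}_k},\mathbf{S}_{k\setminus L_\ell}\rangle$ for every $k$ and every $\ell$ in the block of $x_k$, so every receiver decodes and the code is valid over every field.

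I expect the delicate part to be exactly this last matching. One must keep precise track, after the reindexing, of which consecutive rows of $\mathbf{L}$ form the encoding block of $x_k$ and of each interfering message, and then correctly pair the two \emph{asymmetric} guarantees of Lemma \ref{lemma3} (sizes $h-1$ and $\gcd(g,h)-1$) with the two sides of interference; this is where the standing hypothesis $U\le D$ and the defining inequality of $\mathbf{S}_{K,D,U,m}$ are used, and the argument is sensitive to the orientation of the AIR matrix. A secondary check, needed for the ``contiguous window'' description to be legitimate, is that the window $[(k-U-m)b:(k+2m+D+1)b-1]$ and the two one-sided budgets do not wrap around $\mathbb{Z}_{Kb}$ and overlap themselves; this follows from the standing size constraints on $K,D,U,m$ (and from the admissible $(a,b)$ keeping $h\le Kb$).
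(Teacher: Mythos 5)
Your proof is correct and rests on the same two pillars as the paper's (the combining map \eqref{thm1eq6} that turns consecutive interference into a contiguous block structure, plus Lemma \ref{lemma3} together with the gcd condition of Definition \ref{def1}), but it is organized differently. The paper never unfolds \eqref{thm1eq5} back into the original coordinates: it works at the level of the extended symbols $y_{k,j}$, counts how many extended symbols before and after the desired one are ``interfering'' (including a case analysis over extended symbols containing not-yet-decoded coordinates of $x_k$ itself), invokes Lemma \ref{lemma1} for the resulting $b$-dimensional SNI problem to conclude that at most $b(D+2m+1)-1$ after and $b(U+2m+1)-1$ before can be tolerated, and then performs a two-stage decoding: first recover $y_{k+t-1,j}$, then strip off side-information to get $x_{k,(j-1)(m+1)+t}$. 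You instead substitute \eqref{thm1eq6} into \eqref{thm1eq5}, observe that the effective $b(m+1)$-row submatrix of each message $x_p$ is (up to row permutation) the cyclically consecutive block $L_{pb},\ldots,L_{pb+b(m+1)-1}$, note that the own block together with the before- and after-interference blocks forms one contiguous window of $b(U+D+3m+1)$ rows, and verify the per-row criterion of Lemma \ref{lemma1} in one shot; this absorbs the ``undecoded own symbols'' into $\mathbf{S}_{k\setminus i}$ and so dispenses with the paper's sequential-decoding bookkeeping, at the cost of making the SNI reduction implicit rather than explicit. Your side counts $(U+m)b+s\le b(U+2m+1)-1$ and $b(D+2m+1)-1-s$ match the paper's budgets exactly, and your pairing of the $D$-side with the $h-1$ budget and the $U$-side with the $\gcd(g,h)-1$ budget is the same pairing the paper uses (its own above/below/after/before wording in and around Lemma \ref{lemma3} is inconsistent, so flagging the orientation as the delicate point is apt). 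The residual caveats you mention — cyclic reading of Lemma \ref{lemma3} and the window not wrapping around $\mathbb{Z}_{Kb}$ — are shared with (and left implicit in) the paper's own argument, so they do not constitute a gap relative to it.
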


\begin{proof}
We prove that every receiver $R_k$ for $k \in [0:K-1]$ decodes its $b(m+1)$ wanted message symbols $x_{k,1},x_{k,2},\ldots,x_{k,b(m+1)}$. We have
\begin{align*}
y_{k,j}=\sum_{i=1}^{m+1} x_{k+1-i,(j-1)(m+1)+i} 
\end{align*}
for $k \in [0:K-1]$ and $j \in [1:b]$. We refer $y_{k,j}$ as extended message symbol. The alignment of extended message symbols $y_{k,1}$ for $k \in [0:K-1]$ is shown in Fig. \ref{vcfig1}.
\begin{figure*}[t]
\centering
\includegraphics[scale=0.45]{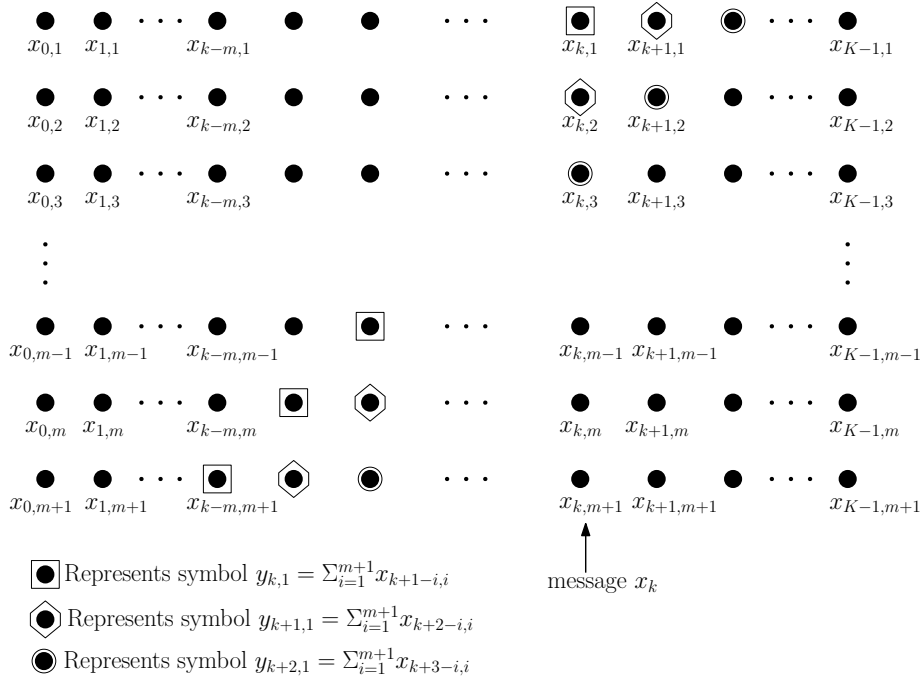}
\caption{Alignment of the symbols $y_{k,j=1}$ for $s \in [0:K-1]$.}
\label{vcfig1}
\hrule
\end{figure*}

The extended message symbol $y_{k,j}$ comprises of m+1 message symbols $x_{k,(j-1)(m+1)+1}$, $x_{k-1,(j-1)(m+1)+2}$,$\ldots$, $x_{k-m,(j-1)(m+1)+m+1}$. The $Kb(m+1)$ message symbols $x_{k,i}$ for $k \in [0:K-1]$ and $i \in [1:b(m+1)]$ appear exactly once in $Kb$ extended message symbols $y_{k,j}$ for $k \in [0:K-1]$ and $j \in [1:b]$.

We say that the extended message symbol $y_t$ is in the interference of $R_k$ if 
\begin{itemize}
\item $y_t$ comprises of atleast one message symbol belonging to $\mathcal{I}_k$ (or)
\item $y_t$ comprises of a message symbol belonging to $x_k$ and $R_k$ is yet to decode this message symbol. 
\end{itemize}

For the receiver $R_k$, its wanted message symbol $x_{k,(j-1)(m+1)+t}$ is present in in the extended message symbol $y_{k+t-1,j}$ for every $j \in [1:b]$ and $t \in [1:m+1]$. We have 
\begin{align*}
y_{k+t-1,j}=\sum_{i=1}^{m+1} x_{k+t-i,(j-1)(m+1)+i}.
\end{align*}

In $y_{k+t-1,j}$, every message symbol other than  $x_{k,(j-1)(m+1)+t}$ is in side-information of $R_k$ for every $k \in [0:K-1],j \in [1:b]$ and $t \in [1:m+1]$.

From the the alignment of the extended message symbols $y_{k,j}$ for every $k \in [0:K-1]$ and $j \in [1:b]$, we can conclude the following observations regarding the number of interfering extended messages after $y_{k-t+1,j}$.  

\begin{itemize}
\item In $y_{k+s,j}$, every message symbol is in the side-information of $R_k$ except $x_{k,(j-1)(m+1)+s+1}$ for $s \in [t:m]$. If $R_k$ has not decoded $x_{k,(j-1)(m+1)+s+1}$, then $y_{k+s,j}$ acts as an interference to $R_k$. There exists atmost $b(m-t)$ interfering extended message symbols like this.
\item In $y_{k+m+s,j}$, the message symbol $x_{k+m+s,(j-1)(m+1)+1}$ is in the interference 
to $R_k$ for every $j \in [1:b]$ and $s \in [1:D]$. There exists $bD$ interfering extended message symbols like this.
\item In $y_{k+m+D+s,j}$, the message symbol $x_{k+m+D,(j-1)(m+1)+s+1}$ is in the interference to $R_k$ for every $j \in [1:b]$ and $s \in [1:m]$. There exists $bm$ interfering extended message symbols like this.
\item In $y_{k+t-1,j^\prime}$ for $j^\prime \in [1:b]$ and $j^\prime \neq j$, the message symbol $x_{k,(j^\prime-1)(m+1)+t}$ is present and it contributes to interference to $R_k$ if $R_k$ has not decoded it before. There exists atmost $b-1$ interfering extended message symbols like this.
\end{itemize}

From the the alignment of the extended message symbols $y_{k,j}$ for every $k \in [0:K-1]$ and $j \in [1:b]$, we can conclude the following observations regarding the number of interfering extended messages before $y_{k-t+1,j}$. 
\begin{itemize}
\item In $y_{k+s,j}$ for $s \in [0:t-2]$, every message symbol is in side-information of $R_k$ except $x_{k,(j-1)(m+1)+s+1}$. If $R_k$ has not decoded $x_{k,(j-1)(m+1)+s+1}$, then $y_{k+s,j}$ acts as an interference to $R_k$. There exists atmost $bt$ interfering extended message symbols like this.
\item In the extended message symbol $y_{k-s,j}$, the message symbol $x_{k-m-s,(j-1)(m+1)+m+1}$ is in the interference 
to $R_k$ for every $j \in [1:b]$ and $s \in [1:U]$. There exists $bU$ interfering extended message symbols like this.
\item In the extended message symbol $y_{k-U-s,j}$, the message symbol $x_{k-m-U,(j-1)(m+1)+m+1-s}$ is in the interference to $R_k$ for every $j \in [1:b]$ and $s \in [1:m]$. There exists $bm$ interfering extended message symbols like this.
\item In $y_{k+t-1,j^\prime}$ for $j^\prime \in [1:b]$ and $j^\prime \neq j$, the message symbol $x_{k,(j^\prime-1)(m+1)+t}$ is present and it contributes to interference to $R_k$ if $R_k$ has not decoded it before. There exists atmost $b-1$ interfering extended message symbols like this.
\end{itemize}

For $t \in [0:K-1]/[k-m-U:k+D+2m]$, every message symbol in $y_{t,j}$ is in side-information of $R_k$ for every $k \in [0:K-1]$ and $j \in [1:b]$.

Hence, there exists atmost $$b(m-t)+bD+bm+b-1$$ extended message symbols after $y_{k+t-1,j}$ and $$bt+bU+bm+b-1$$ extended message below  $y_{k+t-1,j}$ such that every extended message symbol comprises of atleast one message symbol which belongs to $x_k \cup \mathcal{I}_k$. 

If $t=m+1$, the receiver $R_k$ sees $b(D+m)+b-1$ interfering messages after and $b(U+2m)+b-1$ interfering extended message before the desired extended message. If $t=1$, the receiver $R_k$ sees $b(D+2m)+b-1$ interfering messages after and $b(U+m)+b-1$ interfering extended messages before the desired extended message. For any $t \in [1:m+1]$, the receiver $R_k$ sees atmost $b(D+2m+1)-1$ interfering messages after and $b(U+2m+1)-1$ interfering extended messages before the desired extended message.

Given $(a,b) \in \mathbf{S}_{K,D,U,m}$ and $\mathbf{L}$ is the AIR matrix of size $Kb \times (b(D+2m+1)+a)$. From Lemma \ref{lemma3}, every row $L_k$ is not in the span of $b(D+2m+1)+a$ after and $\text{gcd}(Kb,b(D+2m+1)+a)-1$ rows before to $L_k$ for every $k \in [0:Kb-1]$. From Definition \ref{def1}, $a$ and $b$ are the positive integers satisfying the relation
\begin{align*}
\text{gcd}(Kb,b(D+2m+1)+a)=b(U+2m+1)+c
\end{align*}
for some $c \in Z_{\geq 0}$. Hence, every row $L_k$ in $\mathbf{L}$ is not in the span of $b(D+2m+1)+a-1$ rows above and $b(U+2m+1)-1$ rows below to $L_k$ for each $k \in [0:Kb-1]$. According to Lemma \ref{lemma1},  the matrix $\mathbf{L}$ can be used as a $b$-dimensional encoding matrix for SUICP(SNI) with $K$ messages, $b(D+2m)+b-1$ and $b(U+2m)+b-1$ interfering messages after and before. Hence, the receiver $R_k$ decodes $y_{k+t-1,j}$ and then decodes $x_{k,(j-1)(m+1)+t}$ for every $t \in [1:m+1],j \in [1:b]$.

For every message symbol $x_{k,(j-1)(m+1)+i}$ for $i \in [1:m+1]$, the decoding is performed as given below.
\begin{itemize}
\item $R_k$ first decodes the extended message symbol $y_{k-i+1,j}$ for $i \in [1:m+1]$, where the message symbol $x_{k,(j-1)(m+1)+i}$ is present.
\item In $y_{k-i+1,j}$, every message symbol present is in the side-information of $R_k$. Hence, $R_k$ decodes its wanted message symbol $x_{k,(j-1)(m+1)+i}$ from $y_{k-i+1,j}$.
\end{itemize}

The matrix $\mathbf{L}$ is mapping $Kb(m+1)$ message symbols into $b(D+2m+1)+a$ broadcast symbols. The rate achieved by the proposed encoding scheme is given by
\begin{align*}
\frac{b(D+2m+1)+a}{b(m+1)}=1+\frac{D+m+\frac{a}{b}}{m+1}.
\end{align*}
\end{proof}

In Lemma \ref{lemma6}, we show that the rate achieved by vector linear index codes constructed in Theorem \ref{thm1} are atmost  $\frac{K \text{mod} (D+2m+1)}{(m+1)\left\lfloor\frac{K}{D+2m+1}\right\rfloor}$ away from the lowerbound on broadcast rate of SUICP(SCI).
\begin{lemma}
\label{lemma6}
For every  SUICP(SCI) with arbitrary $K,D,U$ and $m$, there exists $(a,b) \in \mathcal{\mathbf{S}}_{K,D,U,m}$ such that
\begin{align}
\label{gcd84}
1+\frac{D+m+\frac{a}{b}}{m+1} \leq \frac{K}{(m+1)\left \lfloor \frac{K}{D+2m+1} \right \rfloor}.
\end{align}
\end{lemma}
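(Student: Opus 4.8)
The plan is to exhibit an explicit pair $(a,b)$ in $\mathbf{S}_{K,D,U,m}$ for which the rate bound \eqref{gcd84} holds, using the divisibility structure of the set definition in \eqref{ab}. The natural candidate is to take $b$ to be the value that makes $b(D+2m+1)$ a multiple of the ``period'' $D+2m+1$ in a controlled way, and to choose $a$ to be exactly the deficiency $K \bmod (D+2m+1)$ scaled appropriately, so that $b(D+2m+1)+a$ becomes a clean multiple that divides $bK$. Concretely, write $K = p(D+2m+1) + r$ with $p = \lfloor K/(D+2m+1)\rfloor$ and $r = K \bmod (D+2m+1)$. First I would try $b=1$, $a = r$: then $b(D+2m+1)+a = D+2m+1+r$, but this need not divide $K$ in general, so the honest choice is to set things up so that $\gcd(bK, b(D+2m+1)+a)$ is forced to be large. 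The cleanest route is to pick $b$ and $a$ so that $b(D+2m+1)+a = bK/p$ (hence a divisor of $bK$), i.e. $a = b\big(K/p - (D+2m+1)\big)$; this requires $p \mid bK$, which I would arrange by taking $b = p$ (or $b = p/\gcd(p,K)$), giving $a = K - p(D+2m+1) = r \cdot (b/p)\cdot\text{(something)}$ — the exact bookkeeping of clearing denominators is the routine part.

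With such a choice, $b(D+2m+1)+a$ is a divisor of $bK$, so $\gcd(bK, b(D+2m+1)+a) = b(D+2m+1)+a$. To land in $\mathbf{S}_{K,D,U,m}$ I then need $b(D+2m+1)+a \geq b(U+2m+1)$, i.e. $bD + a \geq bU$, which is immediate since $D \geq U$ and $a \geq 0$. This is the step that makes the constraint $U \leq D$ do its work, and it is essentially free. So membership in the set holds for this $(a,b)$.

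It remains to verify the numerical inequality \eqref{gcd84}. With the chosen $(a,b)$ we have $b(D+2m+1)+a = bK/p$ where $p = \lfloor K/(D+2m+1)\rfloor$, so
\begin{align*}
1+\frac{D+m+\frac{a}{b}}{m+1} = \frac{b(D+2m+1)+a}{b(m+1)} = \frac{bK/p}{b(m+1)} = \frac{K}{(m+1)\left\lfloor \frac{K}{D+2m+1}\right\rfloor},
\end{align*}
which matches the right-hand side of \eqref{gcd84} with equality. (If instead one uses $b=1,a=r$ only when $(D+2m+1+r)\mid K$, one gets a possibly smaller rate, hence the inequality; the displayed clean choice always achieves the bound.) So the lemma follows once the pair is shown to be well-defined and to lie in $\mathbf{S}_{K,D,U,m}$.

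The main obstacle is the first step: showing that a valid $(a,b)$ with $b(D+2m+1)+a$ dividing $bK$ actually exists for \emph{every} $K,D,U,m$, i.e. correctly choosing $b$ to clear the denominator $p = \lfloor K/(D+2m+1)\rfloor$ while keeping $a$ a nonnegative integer. The subtlety is that $K/p$ need not be an integer, so one cannot simply take $b=1$; taking $b = p$ forces $a = K - p(D+2m+1) = r \geq 0$, an integer, and $b(D+2m+1)+a = p(D+2m+1)+r = K$ divides $bK = pK$ trivially — so in fact $b=p$, $a=r$ works directly, and $\gcd(pK, K) = K \geq p(U+2m+1)$ needs checking. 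That last inequality, $K \geq \lfloor K/(D+2m+1)\rfloor (U+2m+1)$, is the one genuine thing to prove, and it follows from $K \geq \lfloor K/(D+2m+1)\rfloor(D+2m+1) \geq \lfloor K/(D+2m+1)\rfloor(U+2m+1)$ since $U \leq D$. After that, the rate computation above is a one-line substitution.
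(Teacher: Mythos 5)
Your proposal is correct and follows essentially the same route as the paper: both take $a = K \bmod (D+2m+1)$ and $b = \left\lfloor \frac{K}{D+2m+1} \right\rfloor$, so that $b(D+2m+1)+a = K$, check membership in $\mathbf{S}_{K,D,U,m}$ via $\gcd(bK,K) = K \geq b(U+2m+1)$ (using $U \leq D$), and observe that the achieved rate then equals $\frac{K}{(m+1)\lfloor K/(D+2m+1)\rfloor}$ exactly. Your write-up is just a more explicit version of the paper's assertion that this $(\alpha,\gamma)$ pair satisfies the gcd condition, with some exploratory detours that the final paragraph correctly resolves.
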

\begin{proof}
We have 
\begin{align*}
K=\left\lfloor\frac{K}{D+2m+1}\right\rfloor(D+2m+1)+K \text{mod} (D+2m+1).
\end{align*}
Hence, 
\begin{align}
\label{gcd83}
\frac{K}{\left\lfloor\frac{K}{D+2m+1}\right\rfloor}=D+2m+1+\frac{K \text{mod} (D+2m+1)}{\left\lfloor\frac{K}{D+2m+1}\right\rfloor}.
\end{align}
From \eqref{gcd83}, we have 
\begin{align}
\label{gcd841}
\nonumber
\frac{K}{(m+1)\left\lfloor\frac{K}{D+2m+1}\right\rfloor}&=\underbrace{1+\frac{D+m}{m+1}}_{\text{lowerbound on}~\beta}+\frac{K \text{mod} (D+2m+1)}{(m+1)\left\lfloor\frac{K}{D+2m+1}\right\rfloor}\\&=1+\frac{D+m}{m+1}+\frac{\alpha}{(m+1)\gamma},
\end{align}
where $\alpha=K \text{mod} (D+2m+1)$ and $\gamma=\left\lfloor\frac{K}{D+2m+1}\right\rfloor$.

From \eqref{gcd83}, we have
\begin{align}
\label{gcd81}
\alpha=K-\gamma(D+2m+1)
\end{align}
and these values of $\alpha$ and $\gamma$ satisfy the equation $\text{gcd}(K \gamma,\gamma(D+2m+1)+\alpha)\geq \gamma (U+2m+1)$. Hence, $(\alpha,\gamma) \in \mathcal{\mathbf{S}}_{K,D,U,m}$. This completes the proof.
\end{proof}
\begin{corollary}
\label{cor4}
In SUICP(SCI), the vector linear index codes constructed by AIR matrices are within $\frac{K \text{mod} (D+2m+1)}{(m+1)\left\lfloor\frac{K}{D+2m+1}\right\rfloor}$ symbols per message from the lowerbound on broadcast rate given in Theorem \ref{lowerbound}.
\end{corollary}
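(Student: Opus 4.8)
The statement is essentially Lemma~\ref{lemma6} rephrased as a bound on the gap to the lowerbound, so the plan is short: chain together Theorem~\ref{thm1}, Lemma~\ref{lemma6}, and Theorem~\ref{lowerbound}. First I would invoke Theorem~\ref{thm1}: for every admissible pair $(a,b)\in\mathbf{S}_{K,D,U,m}$ the rate $1+\frac{D+m+a/b}{m+1}$ is achievable by a $b(m+1)$-dimensional vector linear index code built from the $(Kb)\times(b(D+2m+1)+a)$ AIR matrix. Consequently, the broadcast rate realized by the AIR-matrix construction is at most
\begin{align*}
\min_{(a,b)\in\mathbf{S}_{K,D,U,m}}\left(1+\frac{D+m+\frac{a}{b}}{m+1}\right).
\end{align*}

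Next I would specialise to the explicit pair produced in Lemma~\ref{lemma6}, namely $(\alpha,\gamma)=\bigl(K-\gamma(D+2m+1),\ \left\lfloor K/(D+2m+1)\right\rfloor\bigr)$, which that lemma verifies to lie in $\mathbf{S}_{K,D,U,m}$ via the divisibility relation $\text{gcd}(K\gamma,\gamma(D+2m+1)+\alpha)\geq\gamma(U+2m+1)$, and for which it establishes
\begin{align*}
1+\frac{D+m+\frac{\alpha}{\gamma}}{m+1}\ \leq\ \frac{K}{(m+1)\left\lfloor\frac{K}{D+2m+1}\right\rfloor}.
\end{align*}
Hence the construction attains a rate no larger than $\dfrac{K}{(m+1)\left\lfloor K/(D+2m+1)\right\rfloor}$. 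Finally I would subtract the lowerbound $1+\frac{D+m}{m+1}$ of Theorem~\ref{lowerbound}; the algebraic identity already recorded in \eqref{gcd841} gives
\begin{align*}
\frac{K}{(m+1)\left\lfloor\frac{K}{D+2m+1}\right\rfloor}-\left(1+\frac{D+m}{m+1}\right)=\frac{K~\text{mod}~(D+2m+1)}{(m+1)\left\lfloor\frac{K}{D+2m+1}\right\rfloor},
\end{align*}
which is exactly the claimed per-message gap.

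I do not expect any genuine obstacle: all three ingredients are proved earlier in the excerpt, and the corollary only reorganises them. The single point that deserves a sentence of care is that the pair $(\alpha,\gamma)$ from Lemma~\ref{lemma6} is genuinely a member of $\mathbf{S}_{K,D,U,m}$, so that Theorem~\ref{thm1} applies to it and the corresponding rate is indeed achievable by an AIR-matrix code; this membership is precisely what the last display in the proof of Lemma~\ref{lemma6} checks, so no new argument is needed.
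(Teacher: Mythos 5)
Your argument is correct and matches the paper's route exactly: the paper's Corollary~\ref{cor4} is obtained precisely by combining the achievability of Theorem~\ref{thm1} for the pair $(\alpha,\gamma)$ exhibited in Lemma~\ref{lemma6} with the decomposition in \eqref{gcd841} and the lowerbound of Theorem~\ref{lowerbound}. Nothing further is needed.
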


\begin{example}
\label{ex1}
Consider a SUICP(SCI) with $K=18,D=7,U=1$ and $m=2$. For this SUICP(SCI), we have $D+2m=11,U+2m=5$ and $\text{gcd}(K,D+2m+1)=\text{gcd}(18,12)=6 \geq U+2m+1$. Hence, for this SUICP(SCI), $(a,b)=(0,1) \in \mathbf{S}_{K,D,U,m}$. The rate achieved by proposed construction is
\begin{align*}
1+\frac{D+m+\frac{a}{b}}{m+1}=1+\frac{7}{3}+\frac{2}{3}=4.
\end{align*}
We have $y_k=x_{k,1}+x_{k-1,2}+x_{k-2,3}$ for $k \in [0:17]$. The index code for this SUICP(SCI) is obtained by 
\begin{align*}
[c_0~c_1~ \ldots ~c_{11}]=[y_0~y_1~\ldots~y_{17}]\mathbf{L}_{18 \times 12},
\end{align*}
where $\mathbf{L}_{18 \times 12}$ is the AIR matrix of size $18 \times 12$ as given below.

\arraycolsep=0.5pt
\setlength\extrarowheight{-3.0pt}
{
$$\mathbf{L}_{18 \times 12}=\left[
\begin{array}{cccccccccccccc}
1 & 0 & 0 & 0 & 0 & 0 & 0 & 0 & 0 & 0 & 0 & 0\\
0 & 1 & 0 & 0 & 0 & 0 & 0 & 0 & 0 & 0 & 0 & 0\\
0 & 0 & 1 & 0 & 0 & 0 & 0 & 0 & 0 & 0 & 0 & 0\\
0 & 0 & 0 & 1 & 0 & 0 & 0 & 0 & 0 & 0 & 0 & 0\\
0 & 0 & 0 & 0 & 1 & 0 & 0 & 0 & 0 & 0 & 0 & 0\\
0 & 0 & 0 & 0 & 0 & 1 & 0 & 0 & 0 & 0 & 0 & 0\\
0 & 0 & 0 & 0 & 0 & 0 & 1 & 0 & 0 & 0 & 0 & 0\\
0 & 0 & 0 & 0 & 0 & 0 & 0 & 1 & 0 & 0 & 0 & 0\\
0 & 0 & 0 & 0 & 0 & 0 & 0 & 0 & 1 & 0 & 0 & 0\\
0 & 0 & 0 & 0 & 0 & 0 & 0 & 0 & 0 & 1 & 0 & 0\\
0 & 0 & 0 & 0 & 0 & 0 & 0 & 0 & 0 & 0 & 1 & 0\\
0 & 0 & 0 & 0 & 0 & 0 & 0 & 0 & 0 & 0 & 0 & 1\\
1 & 0 & 0 & 0 & 0 & 0 & 1 & 0 & 0 & 0 & 0 & 0\\
0 & 1 & 0 & 0 & 0 & 0 & 0 & 1 & 0 & 0 & 0 & 0\\
0 & 0 & 1 & 0 & 0 & 0 & 0 & 0 & 1 & 0 & 0 & 0\\
0 & 0 & 0 & 1 & 0 & 0 & 0 & 0 & 0 & 1 & 0 & 0\\
0 & 0 & 0 & 0 & 1 & 0 & 0 & 0 & 0 & 0 & 1 & 0\\
0 & 0 & 0 & 0 & 0 & 1 & 0 & 0 & 0 & 0 & 0 & 1\\
 \end{array}
\right]$$
}

The 12 broadcast symbols for this SUICP(SCI) are given in Table \ref{table1} below. 

\begin{table}[ht]
\centering
\setlength\extrarowheight{0pt}
\begin{tabular}{|c|}
\hline
$c_0=\underbrace{x_{0,1}+x_{17,2}+x_{16,3}}_{y_0}+\underbrace{x_{12,1}+x_{11,2}+x_{10,3}}_{y_{12}}$\\
\hline
$c_1=\underbrace{x_{1,1}+x_{0,2}+x_{17,3}}_{y_1}+\underbrace{x_{13,1}+x_{12,2}+x_{11,3}}_{y_{13}}$ \\
\hline
$c_2=\underbrace{x_{2,1}+x_{1,2}+x_{0,3}}_{y_2}+\underbrace{x_{14,1}+x_{13,2}+x_{12,3}}_{y_{14}}$ \\
\hline
$c_3=\underbrace{x_{3,1}+x_{2,2}+x_{1,3}}_{y_3}+\underbrace{x_{15,1}+x_{14,2}+x_{13,3}}_{y_{15}}$ \\
\hline
$c_4=\underbrace{x_{4,1}+x_{3,2}+x_{2,3}}_{y_4}+\underbrace{x_{16,1}+x_{15,2}+x_{15,3}}_{y_{16}}$ \\
\hline
$c_5=\underbrace{x_{5,1}+x_{4,2}+x_{3,3}}_{y_5}+\underbrace{x_{17,1}+x_{16,2}+x_{15,3}}_{y_{17}}$ \\
\hline
$c_6=\underbrace{x_{6,1}+x_{5,2}+x_{4,3}}_{y_6}+\underbrace{x_{12,1}+x_{11,2}+x_{10,3}}_{y_{12}}$\\
\hline
$c_7=\underbrace{x_{7,1}+x_{6,2}+x_{5,3}}_{y_7}+\underbrace{x_{13,1}+x_{12,2}+x_{11,3}}_{y_{13}}$ \\
\hline
$c_8=\underbrace{x_{8,1}+x_{7,2}+x_{6,3}}_{y_8}+\underbrace{x_{14,1}+x_{13,2}+x_{12,3}}_{y_{14}}$ \\
\hline
$c_9=\underbrace{x_{9,1}+x_{8,2}+x_{7,3}}_{y_9}+\underbrace{x_{15,1}+x_{14,2}+x_{13,3}}_{y_{15}}$ \\
\hline
$ c_{10}=\underbrace{x_{10,1}+x_{9,2}+x_{8,3}}_{y_{10}}+\underbrace{x_{16,1}+x_{15,2}+x_{14,3}}_{y_{16}}$  \\
\hline
$c_{11}=\underbrace{x_{11,1}+x_{10,2}+x_{9,3}}_{y_{11}}+\underbrace{x_{17,1}+x_{16,2}+x_{15,3}}_{y_{17}}$ \\
\hline
\end{tabular}
\vspace{5pt}
\caption{Vector linear index code for SUICP(SNI) given in Example \ref{ex1}}
\label{table1}
\vspace{-5pt}
\end{table}

Receiver $R_k$ required to decode three message symbols $x_{k,1},x_{k,2}$ and $x_{k,3}$ for $k \in [0:17]$. Let $\tau_{k,j}$ be the code symbols used by receiver $R_k$ to decode $x_{k,j}$ for $k \in [0:17]$ and $j \in [1:3]$. Table \ref{table2} gives the code symbols used by each receiver to decode its wanted message symbol.
\begin{table}[ht]
\centering
\setlength\extrarowheight{0pt}
\begin{tabular}{|c|c|c|c|c|c|}
\hline
$x_{k,1}$ &$\tau_{k,1}$&$x_{k,2}$&$\tau_{k,2}$&$x_{k,3}$&$\tau_{k,3}$ \\
\hline
$x_{0,1}$ & $c_0$ & $x_{0,2}$&$c_1$&$x_{0,3}$&$c_2$ \\
\hline
$x_{1,1}$ & $c_1$ & $x_{1,2}$&$c_2$&$x_{1,3}$&$c_3$  \\
\hline
$x_{2,1}$ & $c_2$ & $x_{2,2}$&$c_3$&$x_{2,3}$ &$c_4$\\
\hline
$x_{3,1}$ & $c_3$ &$x_{3,2}$&$c_4$&$x_{3,3}$ &$c_5$\\
\hline
$x_{4,1}$ & $c_4$ &$x_{4,2}$&$c_5$&$x_{4,3}$&$c_0+c_6$\\
\hline
$x_{5,1}$ & $c_5$ & $x_{5,2}$&$c_0+c_6$ &$x_{5,3}$&$c_1+c_7$\\
\hline
$x_{6,1}$ & $c_0+c_6$ & $x_{6,2}$& $c_1+c_7$ &$x_{6,3}$& $c_2+c_8$\\
\hline
$x_{7,1}$ & $c_1+c_7$ & $x_{7,2}$& $c_2+c_8 $&$x_{7,3}$& $c_3+c_9$ \\
\hline
$x_{8,1}$& $c_2+c_8$ & $x_{8,2}$& $c_3+c_9$ &$x_{8,3}$& $c_4+c_{10} $ \\
\hline
$x_{9,1}$ & $c_3+c_9$ & $x_{9,2}$& $c_4+c_{10} $ &$x_{9,3}$& $c_5+c_{11}$  \\
\hline
$x_{10,1}$ & $c_4+c_{10} $ & $x_{10,2}$& $c_5+c_{11} $ &$x_{10,3}$& $c_6$  \\
\hline
$x_{11,1}$ & $c_5+c_{11} $ & $x_{11,2}$& $c_6$ &$x_{11,3}$&$c_7$\\
\hline
$x_{12,1}$ & $c_6$ & $x_{12,2}$&$c_7$&$x_{12,3}$&$c_8$ \\
\hline
$x_{13,1}$ & $c_7$ & $x_{13,2}$& $c_8 $&$x_{13,3}$& $c_9$ \\
\hline
$x_{14,1}$& $c_8 $ & $x_{14,2}$& $c_9 $ &$x_{14,3}$& $c_{10} $ \\
\hline
$x_{15,1}$ & $c_9 $ & $x_{15,2}$& $c_{10} $ &$x_{15,3}$& $c_{11} $  \\
\hline
$x_{16,1}$ & $c_{10} $ & $x_{16,2}$& $c_{11} $ &$x_{16,3}$& $c_0$  \\
\hline
$x_{17,1}$ & $c_{11} $ & $x_{17,2}$& $c_0$ &$x_{17,3}$&$c_1$\\
\hline
\end{tabular}
\vspace{5pt}
\caption{Decoding of vector linear index code for SUICP(SNI) given in Example \ref{ex1}}
\label{table2}
\hrule
\end{table}
\end{example}

\begin{example}
\label{ex2}
Consider a SUICP(SCI) with $K=13,D=5,U=1$ and $m=1$. For this SUICP(SCI), we have $D+2m=7,U+2m=3$ and $a=2,b=3$ satisfy the equation $\text{gcd}(bK,b(D+2m+1)+a)\geq 6 \geq b(U+2m+1)$. The rate achieved by proposed construction is
\begin{align*}
1+\frac{D+m+\frac{a}{b}}{m+1}&=1+\frac{5+1+\frac{2}{3}}{2}\\& =\frac{26}{6}=3.66.
\end{align*}
We have $y_{k,j}=x_{k,2(j-1)+1}+x_{k-1,2(j-1)+2}$ for $k \in [0:12]$ and $j \in [1:3]$. The index code for this SUICP(SCI) is obtained by 
\begin{align*}
&[c_0~c_1~ \ldots ~c_{25}]\\&=[\underbrace{y_{0,1}~y_{0,2}~y_{0,3}}_{y_0}~\underbrace{y_{1,1}~y_{1,2}~y_{1,3}}_{y_1}~\ldots~
\underbrace{y_{12,1}~y_{12,2}~y_{12,3}}_{y_{12}}]\mathbf{L},
\end{align*}
where $\mathbf{L}$ is the AIR matrix of size $26 \times 39$ as given below.
\arraycolsep=0.5pt
\setlength\extrarowheight{-3.0pt}
{
$$\mathbf{L}_{39 \times 26}=\left[
\begin{array}{ccc}
\mathbf{I}_{26}\\
~~\\
\mathbf{I}_{13}~\mathbf{I}_{13} \\
 \end{array}
\right]$$
}

The 26 broadcast symbols for this SUICP(SCI) are given Table \ref{table3}. 

\begin{table*}[ht]
\centering
\setlength\extrarowheight{2pt}
\begin{tabular}{|c|c|}
\hline
$c_0=\underbrace{x_{0,1}+x_{12,2}}_{y_{0,1}}+\underbrace{x_{8,5}+x_{7,6}}_{y_{8,3}}$ & $c_1=\underbrace{x_{0,3}+x_{12,4}}_{y_{0,2}}+\underbrace{x_{9,1}+x_{8,2}}_{y_{9,1}}$\\
\hline
$c_2=\underbrace{x_{0,5}+x_{12,6}}_{y_{0,3}}+\underbrace{x_{9,3}+x_{8,4}}_{y_{9,2}}$ & $c_3=\underbrace{x_{1,1}+x_{0,2}}_{y_{1,1}}+\underbrace{x_{9,5}+x_{8,6}}_{y_{9,3}}$\\
\hline
$c_4=\underbrace{x_{1,3}+x_{0,4}}_{y_{1,2}}+\underbrace{x_{10,1}+x_{9,2}}_{y_{10,1}}$ & $c_5=\underbrace{x_{1,5}+x_{0,6}}_{y_{1,3}}+\underbrace{x_{10,3}+x_{9,4}}_{y_{10,2}}$\\
\hline
$c_6=\underbrace{x_{2,1}+x_{1,2}}_{y_{2,1}}+\underbrace{x_{10,5}+x_{9,6}}_{y_{10,3}}$ & $c_7=\underbrace{x_{2,3}+x_{1,4}}_{y_{2,2}}+\underbrace{x_{11,1}+x_{10,2}}_{y_{11,1}}$\\
\hline
$c_8=\underbrace{x_{2,5}+x_{1,6}}_{y_{2,3}}+\underbrace{x_{11,3}+x_{10,4}}_{y_{11,2}}$ & $c_9=\underbrace{x_{3,1}+x_{2,2}}_{y_{3,1}}+\underbrace{x_{11,5}+x_{10,6}}_{y_{11,3}}$\\
\hline
$c_{10}=\underbrace{x_{3,3}+x_{2,4}}_{y_{3,2}}+\underbrace{x_{12,1}+x_{11,2}}_{y_{12,1}}$ & $c_{11}=\underbrace{x_{3,5}+x_{2,6}}_{y_{3,3}}+\underbrace{x_{12,3}+x_{11,4}}_{y_{12,2}}$\\
\hline
$c_{12}=\underbrace{x_{4,1}+x_{3,2}}_{y_{4,1}}+\underbrace{x_{12,5}+x_{11,6}}_{y_{12,3}}$ & $c_{13}=\underbrace{x_{4,3}+x_{3,4}}_{y_{4,2}}+\underbrace{x_{8,5}+x_{7,6}}_{y_{8,3}}$\\
\hline
$c_{14}=\underbrace{x_{4,5}+x_{3,6}}_{y_{4,3}}+\underbrace{x_{9,1}+x_{8,2}}_{y_{9,1}}$ & $c_{15}=\underbrace{x_{5,1}+x_{4,2}}_{y_{5,1}}+\underbrace{x_{9,3}+x_{8,4}}_{y_{9,2}}$\\
\hline
$c_{16}=\underbrace{x_{5,3}+x_{4,4}}_{y_{5,2}}+\underbrace{x_{9,5}+x_{8,6}}_{y_{9,3}}$ & $c_{17}=\underbrace{x_{5,5}+x_{4,6}}_{y_{5,3}}+\underbrace{x_{10,1}+x_{9,2}}_{y_{10,1}}$\\
\hline
$c_{18}=\underbrace{x_{6,1}+x_{5,2}}_{y_{6,1}}+\underbrace{x_{10,3}+x_{9,4}}_{y_{10,2}}$ & $c_{19}=\underbrace{x_{6,3}+x_{5,4}}_{y_{6,2}}+\underbrace{x_{10,5}+x_{9,6}}_{y_{10,3}}$\\
\hline
$ c_{20}=\underbrace{x_{6,5}+x_{5,6}}_{y_{6,3}}+\underbrace{x_{11,1}+x_{10,2}}_{y_{11,1}}$  & $c_{21}=\underbrace{x_{7,1}+x_{6,2}}_{y_{7,1}}+\underbrace{x_{11,3}+x_{10,4}}_{y_{11,2}}$\\
\hline
$c_{22}=\underbrace{x_{7,3}+x_{6,4}}_{y_{7,2}}+\underbrace{x_{11,5}+x_{10,6}}_{y_{11,3}}$ & $c_{23}=\underbrace{x_{7,5}+x_{6,6}}_{y_{7,3}}+\underbrace{x_{12,1}+x_{11,2}}_{y_{12,1}}$\\
\hline
$c_{24}=\underbrace{x_{8,1}+x_{7,2}}_{y_{8,1}}+\underbrace{x_{12,3}+x_{11,4}}_{y_{12,2}}$  & $c_{25}=\underbrace{x_{8,3}+x_{7,4}}_{y_{8,2}}+\underbrace{x_{12,5}+x_{11,6}}_{y_{12,3}}$\\
\hline
\end{tabular}
\vspace{5pt}
\caption{Vector linear index code for SUICP(SNI) given in Example \ref{ex2}}
\label{table3}
\vspace{-5pt}
\end{table*}

Receiver $R_k$ required to decode five message symbols $x_{k,1},x_{k,2},\ldots,x_{k,6}$ for $k \in [0:12]$. Let $\tau_{k,j}$ be the code symbols used by receiver $R_k$ to decode $x_{k,j}$ for $k \in [0:12]$ and $j \in [1:6]$. Table \ref{table4} gives the code symbols used by each receiver to decode its wanted message symbol.

\begin{table*}[ht]
\centering
\setlength\extrarowheight{3pt}
\begin{tabular}{|c|c|c|c|c|c|c|c|c|c|c|c|}
\hline
$x_{k,1}$ &$\tau_{k,1}$&$x_{k,2}$&$\tau_{k,2}$&$x_{k,3}$&$\tau_{k,3}$ & $x_{k,4}$ &$\tau_{k,4}$&$x_{k,5}$&$\tau_{k,5}$&$x_{k,6}$&$\tau_{k,6}$ \\
\hline
$x_{0,1}$ & $c_0$ & $x_{0,2}$&$c_3$&$x_{0,3}$&$c_1$ & $x_{0,1}$ & $c_4$ & $x_{0,2}$&$c_2$&$x_{0,3}$&$c_5$\\
\hline
$x_{1,1}$ & $c_3$ & $x_{1,2}$&$c_6$&$x_{1,3}$&$c_4$ & $x_{1,1}$ & $c_7$ & $x_{1,2}$&$c_5$&$x_{1,3}$&$c_8$\\
\hline
$x_{2,1}$ & $c_6$ & $x_{2,2}$&$c_9$&$x_{2,3}$ &$c_7$ & $x_{2,1}$ & $c_{10}$ & $x_{2,2}$&$c_8$&$x_{2,3}$ &$c_{11}$\\
\hline
$x_{3,1}$ & $c_9$ &$x_{3,2}$&$c_{12}$&$x_{3,3}$ &$c_{10}$ &$x_{3,1}$ & $c_0+c_{13}$ &$x_{3,2}$&$c_{11}$&$x_{3,3}$ &$c_1+c_{14}$ \\
\hline
$x_{4,1}$ & $c_{12}$ &$x_{4,2}$&$c_2+c_{15}$&$x_{4,3}$&$c_0+c_{13}$ &$x_{4,1}$ & $c_3+c_{16}$ &$x_{4,2}$&$c_1+c_{14}$&$x_{4,3}$&$c_4+c_{17}$ \\
\hline
$x_{5,1}$ & $c_2+c_{15}$ & $x_{5,2}$&$c_5+c_{18}$ &$x_{5,3}$&$c_3+c_{16}$ & $x_{5,1}$ & $c_6+c_{19}$ & $x_{5,2}$&$c_4+c_{17}$ &$x_{5,3}$&$c_7+c_{20}$\\
\hline
$x_{6,1}$ & $c_5+c_{18}$ & $x_{6,2}$& $c_8+c_{21}$ &$x_{6,3}$& $c_6+c_{19}$ &$x_{6,1}$ & $c_9+c_{22}$ & $x_{6,2}$& $c_7+c_{20}$ &$x_{6,3}$& $c_{10}+c_{23}$ \\
\hline
$x_{7,1}$ & $c_8+c_{21}$ & $x_{7,2}$& $c_{11}+c_{24}$&$x_{7,3}$& $c_9+c_{22}$ & $x_{7,1}$ & $c_{12}+c_{25}$ & $x_{7,2}$& $c_{10}+c_{23} $&$x_{7,3}$& $c_{13}$\\
\hline
$x_{8,1}$& $c_{11}+c_{24} $ & $x_{8,2}$& $c_{14} $ &$x_{8,3}$& $c_{12}+c_{25} $ & $x_{8,1}$& $c_{15}$ & $x_{8,2}$& $c_{13}$ &$x_{8,3}$& $c_{16}$  \\
\hline
$x_{9,1}$ & $c_{14}$ & $x_{9,2}$& $c_{17} $ &$x_{9,3}$& $c_{15} $ & $x_{9,1}$ & $c_{18} $ & $x_{9,2}$& $c_{16} $ &$x_{9,3}$& $c_{19} $\\
\hline
$x_{10,1}$ & $c_{17} $ & $x_{10,2}$& $c_{20} $ &$x_{10,3}$& $c_{18}$ & $x_{10,1}$ & $c_{21} $ & $x_{10,2}$& $c_{19} $ &$x_{10,3}$& $c_{22}$ \\
\hline
$x_{11,1}$ & $c_{20} $ & $x_{11,2}$& $c_{23}$ &$x_{11,3}$&$c_{21}$ & $x_{11,1}$ & $c_{24} $ & $x_{11,2}$& $c_{22}$ &$x_{11,3}$&$c_{25}$\\
\hline
$x_{12,1}$ & $c_{23}$ & $x_{12,2}$&$c_0$&$x_{12,3}$&$c_{24}$ & $x_{12,1}$ & $c_1$ & $x_{12,2}$&$c_{25}$&$x_{12,3}$&$c_2$\\
\hline
\end{tabular}
\vspace{5pt}
\caption{Decoding of vector linear index code for SUICP(SNI) given in Example \ref{ex2}}
\label{table4}
\hrule
\end{table*}
\end{example}

\begin{example}
\label{ex3}
Consider a SUICP(SCI) with $K=71,D=17,U=3$ and $m=5$. For this SUICP(SCI), we have $D+2m=27,U+2m=13$ and $a=2,b=5$ satisfy the equation $\text{gcd}(bK,b(D+2m+1)+a) \geq b(U+2m+1)$. The rate achieved by proposed construction is
\begin{align*}
1+\frac{D+m+\frac{a}{b}}{m+1}&=1+\frac{17+5+\frac{2}{5}}{6}\\&=\frac{142}{30}=4.733.
\end{align*}
We have $y_{k,j}=x_{k,6(j-1)+1}+x_{k,6(j-1)+2}+x_{k,6(j-1)+3}+x_{k,6(j-1)+4}+x_{k,6(j-1)+5}+x_{k,6(j-1)+6}$ for $k \in [0:70]$ and $j \in [1:5]$. The $30$-dimensional vector linear index code for this SUICP(SCI) is obtained by 
\begin{align*}
&[c_0~c_1~ \ldots ~c_{141}]=\\&
[\underbrace{y_{0,1}~y_{0,2} \ldots y_{0,5}}_{y_0}~\underbrace{y_{1,1}~y_{1,2}\ldots y_{1,5}}_{y_1} \ldots \underbrace{y_{70,1}~y_{70,2}\ldots~y_{70,5}}_{y_{70}}]\mathbf{L},
\end{align*}
where $\mathbf{L}$ is the AIR matrix of size $355 \times 142$.

\end{example}
\subsection{Upperbound on the broadcast rate of SUICP(SCI)}
In \cite{VaR6}, we used extended Euclid algorithm to derive an upperbound on the broadcast rate of SUICP(SNI). In this subsection, we give an upperbound on the broadcast rate of SUICP(SCI) by using the results in \cite{VaR6}.
\begin{definition}
\label{def5}
Let $$a_{min}=\min_{(a,b) \in \mathbf{S}_{K,D,U,m}}~a.$$ Define $b_{min}$ as the corresponding value of $a_{min}$ such that $(a_{min},b_{min}) \in \mathbf{S}_{K,D,U,m}$.
\end{definition}

In \cite{VaR6}, we proved that $b_{min}$ is unique in a given $\mathbf{S}_{K,D,U,m}$ and $\frac{a_{min}}{b_{min}}$ is the value of $\frac{a}{b}$ such that $(a,b) \in \mathbf{S}_{K,D,U,m}$ and $\frac{a}{b}$ is minimum. In \cite{VaR6}, we gave an algorithm to find the values of $a_{min}$ and $b_{min}$ for the given SUICP(SNI). The algorithm given in \cite{VaR6} can be used to find the values of $a_{min}$ and $b_{min}$ for SUICP(SCI) by replacing $D$ with $D+2m$ and $U$ with $U+2m$.
\begin{theorem}
\label{upperbound}
Consider a SUICP(SCI) with arbitrary $K,D,U$ and $m$. The broadcast rate $\beta$ of this index coding problem is upperbounded by \begin{align*}
\beta \leq 1 + \frac{D+m+\frac{a_{min}}{b_{min}}}{m+1}.
\end{align*}
\end{theorem}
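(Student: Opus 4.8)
The plan is to reduce the claim to the already-established achievability result, Theorem~\ref{thm1}, evaluated at the particular pair $(a_{min},b_{min})$. Recall that Definition~\ref{def5} selects $a_{min}$ as the smallest first coordinate among all pairs in $\mathbf{S}_{K,D,U,m}$, and $b_{min}$ as its associated second coordinate; by the result quoted from \cite{VaR6}, this pair exists, is unique in the relevant sense, and realizes the minimum value of $a/b$ over all admissible pairs. The first step is simply to observe that $(a_{min},b_{min}) \in \mathbf{S}_{K,D,U,m}$ by construction, so this pair is a legal input to Theorem~\ref{thm1}.

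Next I would invoke Theorem~\ref{thm1} directly: for this $(a,b)=(a_{min},b_{min})$, the construction produces a $b_{min}(m+1)$-dimensional vector linear index code for the SUICP(SCI) with $K,D,U,m$, using the AIR matrix of size $Kb_{min}\times\bigl(b_{min}(D+2m+1)+a_{min}\bigr)$, and this code achieves rate
\begin{align*}
\frac{b_{min}(D+2m+1)+a_{min}}{b_{min}(m+1)} = 1+\frac{D+m+\frac{a_{min}}{b_{min}}}{m+1}.
\end{align*}
Since the broadcast rate $\beta$ is an infimum over all valid index codes of the ratio of code length to message length (the definition via $\beta=\inf_t \beta_t/t$), any achievable rate is an upper bound on $\beta$. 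Hence $\beta \le 1+\frac{D+m+\frac{a_{min}}{b_{min}}}{m+1}$, which is exactly the assertion of Theorem~\ref{upperbound}.

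There is essentially no hard step here: the theorem is a corollary of Theorem~\ref{thm1} specialized to the extremal pair, together with the definition of broadcast rate. The only point that deserves a sentence of care is the existence and well-definedness of $a_{min}$ and $b_{min}$ — i.e.\ that $\mathbf{S}_{K,D,U,m}$ is nonempty and that the minimizing first coordinate has a well-defined companion — but this is precisely what is imported from \cite{VaR6} (and is also implicit in Lemma~\ref{lemma6}, which exhibits a concrete pair $(\alpha,\gamma)\in\mathbf{S}_{K,D,U,m}$, guaranteeing nonemptiness). If one wanted the proof to be self-contained one would additionally remark that, by the algorithm of \cite{VaR6} applied with $D\mapsto D+2m$ and $U\mapsto U+2m$, choosing the pair with minimal $a$ also minimizes $a/b$, so that among all rates achievable by Theorem~\ref{thm1} this one is the tightest — though for the stated inequality only the single achievable rate at $(a_{min},b_{min})$ is needed.
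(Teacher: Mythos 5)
Your proposal is correct and follows essentially the same route as the paper: the paper also proves Theorem~\ref{upperbound} by noting $(a_{min},b_{min})\in\mathbf{S}_{K,D,U,m}$, applying Theorem~\ref{thm1} to obtain an index code of rate $1+\frac{D+m+\frac{a_{min}}{b_{min}}}{m+1}$, and concluding via the definition of $\beta$ as the infimum of achievable rates. Your extra remarks on nonemptiness of $\mathbf{S}_{K,D,U,m}$ and well-definedness of $(a_{min},b_{min})$ are a small but harmless addition beyond what the paper states explicitly.
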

\begin{proof}
For every $(a,b) \in \mathbf{S}_{K,D,U,m}$, Theorem \ref{thm1} gives the construction of vector linear index codes for SUICP(SCI) with rate 
\begin{align*}
1 + \frac{D+m+\frac{a}{b}}{m+1}.
\end{align*}

We have $(a_{min},b_{min}) \in \mathbf{S}_{K,D,U,m}$. Hence, Theorem \ref{thm1} gives the construction of vector linear index codes for SUICP(SCI) with rate 
\begin{align*}
1 + \frac{D+m+\frac{a_{min}}{b_{min}}}{m+1}.
\end{align*}
The broadcast rate is the infimum of of all achievable rates. This completes the proof. 
\end{proof}
\begin{example}
\label{ex10}
Consider a SUICP(SCI) with $K=71,D=5,U=5$ and $m=2$. For this SUICP(SCI), we have $D+2m=21,U+2m=10$. By using the Algorithm given in \cite{VaR6},  we can obtain $a_{min}=1$ and $b_{min}=7$. Hence, the broadcast rate of this SUICP(SCI) is upperbounded by \begin{align*}
\beta \leq 1+\frac{D+m+\frac{a_{min}}{b_{min}}}{m+1}=1+\frac{5+2+\frac{1}{7}}{3}=\frac{22}{5}=3.38.
\end{align*}
\end{example}
\begin{note}
In Section \ref{sec2}, we shown that the broadcast rate of SUICP(SCI) is lowerbounded by 
\begin{align*}
\beta \geq 1+\frac{D+m}{m+1}=1+\frac{5+2}{3}=3.33.
\end{align*}
Hence, for this SUICP(SCI), we have  
\begin{align*}
3.33 \leq \beta \leq 3.38.
\end{align*}
\end{note}
\begin{example}
\label{ex11}
Consider a SUICP(SCI) with $K=93,D=11,U=1$ and $m=4$. For this SUICP(SCI), we have $D+2m=19,U+2m=9$. By using the Algorithm given in \cite{VaR6},  we can obtain $a_{min}=2$ and $b_{min}=3$. Hence, the broadcast rate of this SUICP(SCI) is upperbounded by \begin{align*}
\beta \leq 1+\frac{D+m+\frac{a_{min}}{b_{min}}}{m+1}=1+\frac{11+4+\frac{2}{3}}{5}=4.13.
\end{align*}
\end{example}
\begin{note}
In Section \ref{sec2}, we shown that the broadcast rate of SUICP(SCI) is lower bounded by 
\begin{align*}
\beta \geq 1+\frac{D+m}{m+1}=1+\frac{11+4}{5}=4.
\end{align*}
Hence, for this SUICP(SCI), we have
\begin{align*}
4.00 \leq \beta \leq 4.13.
\end{align*}
\end{note}

\section{Capacity of some SUICP(SCI)}
\label{sec5}
In this section, we derive the capacity of SUICP(SCI) for arbitrary $K,D$ and $U$, but $m$ satisfying the condition $\text{gcd}(K,D+2m+1)\geq U+2m+1$.

\begin{theorem}
\label{thm4}
Consider an SUICP(SCI) with arbitrary $K,D$ and $U$. For this SUICP(SCI), if $m$ satisfies the condition $\text{gcd}(K,D+2m+1)\geq U+2m+1$, then the capacity of this SUICP(SCI) is given by 
\begin{align*}
C=\frac{m+1}{D+2m+1}.
\end{align*}
\end{theorem}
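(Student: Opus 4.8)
The plan is to squeeze the broadcast rate $\beta$ between the lower bound already established in Theorem \ref{lowerbound} and an achievability bound coming from Theorem \ref{thm1}, and to observe that the hypothesis on $m$ is exactly what makes the two bounds coincide. Since the capacity $C$ is the reciprocal of $\beta$, it suffices to prove $\beta = \frac{D+2m+1}{m+1}$.

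First I would recall that Theorem \ref{lowerbound} gives, with no restriction on $m$, the bound $\beta \geq 1 + \frac{D+m}{m+1} = \frac{D+2m+1}{m+1}$. For the matching upper bound I would check that the pair $(a,b) = (0,1)$ lies in the set $\mathbf{S}_{K,D,U,m}$ of Definition \ref{def1}: substituting $b=1$ and $a=0$ into the defining inequality $\gcd(bK, b(D+2m+1)+a) \geq b(U+2m+1)$ collapses to $\gcd(K, D+2m+1) \geq U+2m+1$, which is precisely the assumed condition on $m$. Hence $(0,1) \in \mathbf{S}_{K,D,U,m}$, and Theorem \ref{thm1} applied with $a=0$, $b=1$ produces an $(m+1)$-dimensional vector linear index code built from the AIR matrix $\mathbf{L}_{K \times (D+2m+1)}$ achieving rate $1 + \frac{D+m+0/1}{m+1} = \frac{D+2m+1}{m+1}$. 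Since the broadcast rate is the infimum of achievable rates, $\beta \leq \frac{D+2m+1}{m+1}$.

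Combining the two inequalities yields $\beta = \frac{D+2m+1}{m+1}$, so $C = 1/\beta = \frac{m+1}{D+2m+1}$, as claimed. I do not anticipate any real obstacle here once Theorems \ref{lowerbound} and \ref{thm1} are available; the only point that needs a line of care is verifying that the hypothesis $\gcd(K, D+2m+1) \geq U+2m+1$ is literally the $(0,1)$-membership condition for $\mathbf{S}_{K,D,U,m}$, which follows immediately from the definition. (Alternatively, the same conclusion can be reached by invoking Lemma \ref{lemma6} and Corollary \ref{cor4}: under the stated condition one has $K \bmod (D+2m+1) = 0$, so the gap term $\frac{K \bmod (D+2m+1)}{(m+1)\lfloor K/(D+2m+1)\rfloor}$ vanishes and the achievable rate meets the lower bound.)
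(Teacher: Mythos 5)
Your main argument is correct and is essentially the paper's own proof: verify that $(a,b)=(0,1)\in\mathbf{S}_{K,D,U,m}$ under the hypothesis $\gcd(K,D+2m+1)\geq U+2m+1$, apply Theorem \ref{thm1} to get the achievable rate $\frac{D+2m+1}{m+1}$, and match it against the lower bound of Theorem \ref{lowerbound}. One caution: your parenthetical alternative is wrong --- the hypothesis does \emph{not} imply $K \bmod (D+2m+1)=0$ (e.g.\ $K=18$, $D=7$, $U=1$, $m=2$ gives $\gcd(18,12)=6\geq 6$ but $18 \bmod 12 = 6$), so the gap term in Lemma \ref{lemma6} need not vanish and that route does not prove the theorem; drop the aside and keep the direct $(0,1)$-membership argument.
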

\begin{proof}
For an SUICP(SCI) with arbitrary $K,D,U$ and $m$, in Theorem \ref{thm1}, we proved that we can combine $m+1$ message symbols into one extended message symbol and then we can use AIR matrix of size $Kb \times (b(D+2m+1)+a)$ to combine $Kb$ extended symbols into $b(D+2m+1)+a$ broadcast symbols to generate a $b(m+1)$ dimensional vector linear index code for the given SUICP(SCI). The rate achieved by using the proposed construction in Theorem \ref{thm1} is 
\begin{align}
\label{cap11}
\frac{b(D+2m+1)+a}{b(m+1)}=1+\frac{D+m+\frac{a}{b}}{m+1}.
\end{align}

Given
\begin{align*}
\text{gcd}(K,D+2m+1)) \geq U+2m+1.
\end{align*}

Hence, we have $(a,b)=(0,1) \in \mathbf{S}_{K,D,U,m}$.  With $a=0,b=1$, the rate achieved by the proposed construction given in Theorem \ref{thm1} reduces to 
\begin{align}
\label{cap12}
\frac{D+2m+1}{m+1}=1+\frac{D+m}{m+1}.
\end{align}
From Theorem \ref{lowerbound}, we have 
\begin{align}
\label{cap13}
\beta \geq 1+\frac{D+m}{m+1}.
\end{align}
From \eqref{cap12} and \eqref{cap13}, we have 
\begin{align}
\label{cap14}
\beta = 1+\frac{D+m}{m+1}=\frac{D+2m+1}{m+1}.
\end{align}
This completes the proof.
\end{proof}
\begin{corollary}
\label{cor5}
Consider an SUICP(SCI) with arbitrary $K,D$ and $m$. For this SUICP(SCI), if $U$ satisfies the condition $\text{gcd}(K,D+2m+1)\geq U+2m+1$, then the capacity of this SUICP(SCI) is given by 
\begin{align*}
C=\frac{m+1}{D+2m+1}.
\end{align*}
\end{corollary}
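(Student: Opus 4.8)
The plan is to observe that this corollary is logically the same assertion as Theorem \ref{thm4}: the hypothesis $\text{gcd}(K,D+2m+1)\geq U+2m+1$ and the conclusion $C=\frac{m+1}{D+2m+1}$ treat $K,D,U,m$ in a way that does not depend on which of them we regard as ``given'' and which as the free parameter to be chosen, so the proof of Theorem \ref{thm4} transfers verbatim. I would therefore simply re-run that argument in the present phrasing.

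Concretely, the first step is to check that the pair $(a,b)=(0,1)$ lies in the set $\mathbf{S}_{K,D,U,m}$ of Definition \ref{def1}. Substituting $a=0$, $b=1$, the defining inequality $\text{gcd}(bK,b(D+2m+1)+a)\geq b(U+2m+1)$ becomes exactly $\text{gcd}(K,D+2m+1)\geq U+2m+1$, which is the hypothesis; hence $(0,1)\in\mathbf{S}_{K,D,U,m}$. The second step is to invoke Theorem \ref{thm1} with this $(a,b)$, which produces a $(m+1)$-dimensional vector linear index code of rate $\frac{1\cdot(D+2m+1)+0}{1\cdot(m+1)}=\frac{D+2m+1}{m+1}=1+\frac{D+m}{m+1}$, so that $\beta\leq\frac{D+2m+1}{m+1}$.

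The third step is to combine this upper bound with the lower bound $\beta\geq 1+\frac{D+m}{m+1}$ established in Theorem \ref{lowerbound}, which forces $\beta=\frac{D+2m+1}{m+1}$ and hence $C=\frac{1}{\beta}=\frac{m+1}{D+2m+1}$.

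I do not expect any genuine obstacle: all three ingredients (Definition \ref{def1}, Theorem \ref{thm1}, Theorem \ref{lowerbound}) are already available, and the only nontrivial-looking point, the membership $(0,1)\in\mathbf{S}_{K,D,U,m}$, reduces immediately to the stated gcd condition. The corollary is worth recording separately from Theorem \ref{thm4} only to make explicit that, for any $K,D,m$, choosing $U$ small enough (so that $U+2m+1\leq\text{gcd}(K,D+2m+1)$) always yields an instance whose capacity is determined exactly.
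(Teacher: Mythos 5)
Your proposal is correct and matches the paper's treatment: the paper gives no separate argument for Corollary \ref{cor5}, since it is the same assertion as Theorem \ref{thm4} with the roles of $U$ and $m$ as ``given'' versus ``conditioned'' parameters interchanged, and the argument you re-run (membership of $(0,1)$ in $\mathbf{S}_{K,D,U,m}$, achievability via Theorem \ref{thm1}, matching lower bound from Theorem \ref{lowerbound}) is exactly the paper's proof of Theorem \ref{thm4}.
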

\begin{note}
In \cite{VaR4}, we derived the capacity of SUICP(SNI) for arbitrary $K$ and $D$, but $U=\text{gcd}(K,D+1)-1$. The capacity result in \cite{VaR4} is a special case of Theorem \ref{thm4} with $m=0$.
\end{note}
\begin{example}
\label{ex7}
Consider a SUICP(SCI) with $K=18,D=7,U=1$ and $m=2$. For this SUICP(SCI), we have $D+2m=11,U+2m=5$ and $\text{gcd}(K,D+2m+1)=\text{gcd}(18,12)=6 \geq U+2m+1$. The capacity of this SUICP(SCI) is
\begin{align*}
1+\frac{D+m}{m+1}=1+\frac{7+2}{3}=4.
\end{align*}
The optimal index code for this SUICP(SCI) is given in Example \ref{ex1}. 
\end{example}

\begin{example}
\label{ex8}
Consider a SUICP(SCI) with $K=55,D=13,U=2$ and $m=4$. For this SUICP(SCI), we have $D+2m=21,U+2m=10$ and $\text{gcd}(K,D+2m+1)=\text{gcd}(55,22)=11 \geq U+2m+1$. The capacity of this SUICP(SCI) is
\begin{align*}
1+\frac{D+m}{m+1}=1+\frac{13+4}{5}=\frac{22}{5}=4.4.
\end{align*}

The optimal index code for this SUICP(SCI) is a five dimensional vector linear index code. The index code for this SUICP(SCI) is given by 
\begin{align*}
&[c_0~c_1~ \ldots ~c_{21}]=[y_{0,1}~y_{1,1}~\ldots~y_{54,1}]\mathbf{L},
\end{align*}
where $\mathbf{L}$ is the AIR matrix of size $55 \times 22$ and $y_{k,1}=x_{k,1}+x_{k,2}+x_{k,3}+x_{k,4}+x_{k,5}$ for $k \in [0:54]$. 
\end{example}
\begin{theorem}
For arbitrary $K,D,U$ and $m$, the broadcast rate of SUICP(SCI) is bounded by 
\begin{align*}
1+\frac{D+m}{m+1} \leq \beta \leq 1+ \frac{D+m}{m+1}+ \frac{\frac{a_{min}}{b_{min}}}{m+1}.
\end{align*}
\end{theorem}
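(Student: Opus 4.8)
The plan is simply to assemble the two bounds already established. The left inequality is precisely the statement of Theorem \ref{lowerbound}: the broadcast rate of any SUICP(SCI) with arbitrary $K,D,U$ and $m$ satisfies $\beta \geq 1 + \frac{D+m}{m+1}$, obtained from the dimension-counting argument via Lemmas \ref{cor2} and \ref{cor3}.

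For the right inequality, I would invoke Theorem \ref{upperbound}, which gives $\beta \leq 1 + \frac{D+m+\frac{a_{min}}{b_{min}}}{m+1}$, where $(a_{min},b_{min}) \in \mathbf{S}_{K,D,U,m}$ is the pair minimizing $\frac{a}{b}$ as in Definition \ref{def5}. Splitting the numerator, this rewrites as
\begin{align*}
1 + \frac{D+m+\frac{a_{min}}{b_{min}}}{m+1} = 1 + \frac{D+m}{m+1} + \frac{\frac{a_{min}}{b_{min}}}{m+1},
\end{align*}
which is the desired upper bound. Combining the two displays yields
\begin{align*}
1+\frac{D+m}{m+1} \leq \beta \leq 1+ \frac{D+m}{m+1}+ \frac{\frac{a_{min}}{b_{min}}}{m+1}.
\end{align*}

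There is no substantial obstacle here, since both inequalities are quoted directly from earlier theorems; the only work is the trivial algebraic rearrangement of the upper bound and noting that $\mathbf{S}_{K,D,U,m}$ is nonempty (guaranteed by Lemma \ref{lemma6}, which exhibits the explicit pair $(\alpha,\gamma)$), so that $\frac{a_{min}}{b_{min}}$ is well defined for every choice of $K,D,U,m$. This completes the proof.
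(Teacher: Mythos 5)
Your proof is correct and matches the paper's own argument, which simply combines Theorem \ref{lowerbound} and Theorem \ref{upperbound}; the algebraic split of the numerator and the remark on nonemptiness of $\mathbf{S}_{K,D,U,m}$ are harmless additions.
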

\begin{proof}
Proof follows from Theorem \ref{lowerbound} and \ref{upperbound}.
\end{proof}
\section{Conclusion}
\label{sec6}
In this paper, we constructed near optimal vector linear index codes for SUICP(SCI) with arbitrary $K,U,D$ and $m$. The constructed codes are independent of field size. We gave an upperbound and lowerbound on the broadcast rate of the SUICP(SCI) with arbitrary $K,U,D$ and $m$. We give the capacity of SUICP(SCI) with arbitrary $K,D$ and $U$ with $m$ satisfying the relation $\text{gcd}(K,D+2m+1) \geq U+2m+1$. The capacity and optimal index coding of SUICP(SCI) with arbitrary $K,D,U$ and $m$ is a challenging open problem.
\section*{Acknowledgment}
This work was supported partly by the Science and Engineering Research Board (SERB) of Department of Science and Technology (DST), Government of India, through J.C. Bose National Fellowship to B. Sundar Rajan.

\end{document}